\newcommand{\deft}{{\stackrel{\triangle}{=}}}
\newcommand{\eg}{{\em e.g., }}
\newcommand{\ie}{{\em i.e., }}
\newcommand{\diag}{\operatorname{diag} \,}
\newcommand{\tr}{\operatorname{Tr}}
\newcommand{\mvec}{\operatorname{vec}}
\newcommand{\bbi}{{{\bf I}}}
\newcommand{\bbB}{{{\bf B}}}
\newcommand{\bbD}{{{\bf D}}}
\newcommand{\bbd}{{{\bf D}}}
\newcommand{\bbb}{{{\bf B}}}
\newcommand{\bbw}{{{\bf W}}}
\newcommand{\bbx}{{{\bf X}}}
\newcommand{\bby}{{{\bf Y}}}
\newcommand{\bh}{{{\bf h}}}
\newcommand{\bbX}{{{\bf X}}}
\newcommand{\bbm}{{{\bf M}}}
\newcommand{\bx}{{\bf x}}
\newcommand{\bs}{{\bf s}}
\newcommand{\by}{{\bf y}}
\newcommand{\ba}{{\bf a}}
\newcommand{\bo}{{\bf 0}}
\newcommand{\bd}{{\bf d}}
\newcommand{\bc}{{\bf c}}
\newcommand{\bz}{{\bf z}}
\newcommand{\A}{{\mathcal{A}}}
\newcommand{\HH}{{\mathcal{H}}}
\newcommand{\I}{{\mathcal{I}}}
\newcommand{\SSS}{{\mathcal{S}}}
\newcommand{\V}{{\mathcal{V}}}
\newcommand{\U}{{\mathcal{U}}}
\newcommand{\T}{{\mathcal{T}}}
\newcommand{\RR}{{\mathbb{R}}}
\newcommand{\mixed}{{2,\I}}
\newcommand{\st}{\operatorname{s.t.} \,}
\newcommand{\inner}[2]{{\langle#1,#2\rangle}}
\newcommand{\prob}{\operatorname{Prob}}
\newcommand{\sigmamax}{\sigma_\textrm{max}}
\newcommand{\sigmamin}{\sigma_\textrm{min}}
\newcommand{\expp}[1]{e^{#1}}
\newcommand{\bl}{\left(}
\newcommand{\br}{\right)}
\newcommand{\bba}{{\mathbf A}}
\newcommand{\bbs}{{\mathbf S}}
\newcommand{\bbz}{{\mathbf Z}}
\newcommand{\bv}{{\mathbf v}}
\newtheorem{theorem}{Theorem}
\newtheorem{proposition}{Proposition}
\newtheorem{corollary}[theorem]{Corollary}
\newtheorem{definition}{Definition}
\title{\singlespace Robust Recovery of Signals From a Structured Union of
Subspaces}
\author{Yonina C.~Eldar,~\IEEEmembership{Senior~Member,~IEEE} and Moshe Mishali,~\IEEEmembership{Student~Member,~IEEE}\thanks{Department of Electrical Engineering,
Technion---Israel Institute of Technology, Haifa 32000, Israel.
Phone: +972-4-8293256, fax: +972-4-8295757, E-mail:
\{yonina@ee,moshiko@techunix\}.technion.ac.il. This work was
supported in part by the Israel Science Foundation under Grant no.
1081/07 and by the European Commission in the framework of the FP7
Network of Excellence in Wireless COMmunications NEWCOM++
(contract no. 216715).}}
\date{\today}
\begin{document}

\maketitle

\begin{abstract}

Traditional sampling theories consider the problem of
reconstructing an unknown signal $x$ from a series of samples. A
prevalent assumption which often guarantees recovery from the
given measurements is that $x$ lies in a known subspace. Recently,
there has been growing interest in nonlinear but structured signal
models, in which $x$ lies in a union of subspaces. In this paper
we develop a general framework for robust and efficient recovery
of such signals from a given set of samples. More specifically, we
treat the case in which $x$ lies in a sum of $k$ subspaces, chosen
from a larger set of $m$ possibilities. The samples are modelled
as inner products with an arbitrary set of sampling functions. To
derive an efficient and robust recovery algorithm, we show that
our problem can be formulated as that of recovering a block-sparse
vector whose non-zero elements appear in fixed blocks. We then
propose a mixed $\ell_2/\ell_1$ program for block sparse recovery.
Our main result is an equivalence condition under which the
proposed convex algorithm is guaranteed to recover the original
signal. This result relies on the notion of block restricted
isometry property (RIP), which is a generalization of the standard
RIP used extensively in the context of compressed sensing. Based
on RIP we also prove stability of our approach in the presence of
noise and modelling errors.
 A special case of our framework is
that of recovering multiple measurement vectors (MMV) that share a
joint sparsity pattern. Adapting our results to this context leads
to new MMV recovery methods as well as equivalence conditions
under which the entire set can be determined efficiently.

\end{abstract}
\section{Introduction}
\label{sec:intro}

Sampling theory has a rich history dating back to Cauchy.
Undoubtedly, the sampling theorem that had the most impact on
signal processing and communications is that associated with
Whittaker, Kotel\'{n}ikov, and Shannon \cite{S49,J77}. Their famous
result is that a bandlimited function $x(t)$ can be recovered from
its uniform samples as long as the sampling rate exceeds the
Nyquist rate, corresponding to twice the highest frequency of the
signal \cite{N28}. More recently, this basic theorem has been
extended to include more general classes of signal spaces. In
particular, it can be shown that under mild technical conditions,
a signal $x$ lying in a given subspace  can be recovered exactly
from its linear generalized samples using a series of filtering
operations \cite{U00,V01,EM08,ED04}.

Recently, there has been growing interest in nonlinear signal
models in which the unknown $x$ does not necessarily lie in a
subspace. In order to ensure recovery from the samples, some
underlying structure is needed. A general model that captures many
interesting cases is that in which $x$ lies in a union of
subspaces. In this setting, $x$ resides in one of a set of given
subspaces $\V_i$, however, a priori it is not known in which one.
A special case of this framework is the problem underlying the
field of compressed sensing (CS), in which the goal is to recover
a length $N$ vector $\bx$ from $n<N$ linear measurements, where
$\bx$ has no more than $k$ non-zero elements in some basis
\cite{D06,CRT06}. Many algorithms have been proposed in the
literature in order to recover $\bx$ in a stable and efficient
manner \cite{CDS99,MZ93,CRT06,CT05}. A variety of conditions have
been developed to ensure that these methods recover $\bx$ exactly.
One of the main tools in this context is the restricted isometry
property (RIP) \cite{CRT06,CRT06_2,C08}. In particular, it can be
shown that if the measurement matrix satisfies the RIP then $\bx$
can be recovered by solving an $\ell_1$ minimization algorithm.

Another special case of a union of subspaces is the setting in
which the unknown signal $x=x(t)$ has a multiband structure, so
that its Fourier transform consists of a limited number of bands
at unknown locations \cite{ME07,ME09}. By formulating this problem
within the framework of CS, explicit sub-Nyquist sampling and
reconstruction schemes were developed in \cite{ME07,ME09} that
ensure perfect-recovery at the minimal possible rate. This setup
was recently generalized in \cite{E08,E082} to deal with sampling
and reconstruction of signals that lie in a finite union of
shift-invariant subspaces. By combining ideas from standard
sampling theory with CS results \cite{ME08}, explicit low-rate
sampling and recovery methods were developed for such signal sets.
Another example of a union of subspaces is the set of finite rate
of innovation signals \cite{VMB02,DVB07}, that are modelled as a
weighted sum of shifts of a given generating function, where the
shifts are unknown.

In this paper, our goal is to develop a unified framework for
efficient recovery of signals that lie in a structured union of
subspaces. Our emphasis is on computationally efficient methods
that are stable in the presence of noise and modelling errors. In
contrast to our previous work \cite{E08,E082,ME09,ME07}, here we
consider unions of finite-dimensional subspaces. Specifically, we
restrict our attention to the case in which $x$ resides in a sum
of $k$ subspaces, chosen from a given set of $m$ subspaces $\A_j,1
\leq j \leq m$. However, which subspaces comprise the sum is
unknown. This setting is a special case of the more general union
model considered in \cite{LD08,BD09}. Conditions under which
unique and stable sampling are possible were developed in
\cite{LD08,BD09}. However, no concrete algorithm was provided to
recover such a signal from a given set of samples in a stable and
efficient manner. Here we propose a convex optimization algorithm
that will often recover the true underlying $x$, and develop
explicit conditions under which perfect recovery is guaranteed.
Furthermore, we prove that our method is stable and robust in the
sense that the reconstruction error is bounded in the presence of
noise and mismodelling, namely when $x$ does not lie exactly in
the union. Our results rely on a generalization of the RIP which
fits the union setting we treat here.

Our first contribution is showing that the problem of recovering
$x$ in a structured union of subspaces can be cast as a sparse
recovery problem, in which it is desired to recover a sparse
vector $\bc$ that has a particular sparsity pattern: the non-zero
values appear in fixed blocks. We refer to such a model as block
sparsity. Clearly any block-sparse vector is also sparse in the
standard sense. However, by exploiting the block structure of the
sparsity pattern, recovery may be possible under more general
conditions.

Next, we develop a concrete algorithm to recover a block-sparse
vector from given measurements, which is based on minimizing a
mixed $\ell_2/\ell_1$ norm. This problem can be cast as a convex
second order cone program (SOCP), and solved efficiently using
standard software packages. A mixed norm approach for block-sparse
recovery was also considered in \cite{SPH08,PVMH08}. By analyzing
the measurement operator's null space, it was shown that
asymptotically, as the signal length grows to infinity, and under
ideal conditions (no noise or modeling errors), perfect recovery
is possible with high probability. However, no robust equivalence
results were established between the output of the algorithm and
the true block-sparse vector for a given finite-length measurement
vector, or in the presence of noise and mismodelling.

Generalizing the concept of RIP to our setting, we introduce the
block RIP, which is a less stringent requirement. We then prove
that if the measurement matrix satisfies the block RIP, then our
proposed convex algorithm will recover the underlying block sparse
signal. Furthermore, under block RIP, our algorithm is stable in
the presence of noise and mismodelling errors. Using ideas similar
to \cite{CT05},\cite{CT06} we then prove that random matrices
satisfy the block RIP with overwhelming probability. Moreover, the
probability to satisfy the block RIP is substantially larger than
that of satisfying the standard RIP. These results establish that
a signal $x$ that lies in a finite structured union can be
recovered efficiently and stably with overwhelming probability if
a certain measurement matrix is constructed from a random
ensemble.

An interesting special case of the block-sparse model is the
multiple measurement vector (MMV) problem, in which we have a set
of unknown vectors that share a joint sparsity pattern. MMV
recovery algorithms were studied in
\cite{Cotter,Chen,TroppI,TroppII,ME08}. Equivalence results based
on mutual coherence for a mixed $\ell_p/\ell_1$ program were
derived in \cite{Chen}. These results turn out to be the same as
that obtained from a single measurement problem. This is in
contrast to the fact that in practice, MMV methods tend to
outperform algorithms that treat each of the vectors separately.
In order to develop meaningful equivalence results,  we cast the
MMV problem as one of block-sparse recovery. Our mixed
$\ell_2/\ell_1$ method translates into minimizing the sum of the
$\ell_2$ row-norms of the unknown matrix representing the MMV set.
Our general results lead to RIP-based equivalence conditions for
this algorithm. Furthermore, our framework suggests a different
type of sampling method for MMV problems which tends to increase
the recovery rate. The equivalence condition we obtain in this
case is stronger than the single measurement setting. As we show,
this method leads to superior recovery rate when compared with
other popular MMV algorithms.

The remainder of the paper is organized as follows. In
Section~\ref{sec:union} we describe the general problem of
sampling from a union of subspaces. The relationship between our
problem and that of block-sparse recovery is developed in
Section~\ref{sec:block}. In Section~\ref{sec:unique} we explore
stability and uniqueness issues which leads to the definition of
block RIP. We also present a non-convex optimization algorithm
with combinatorial complexity whose solution is the true unknown
$x$. A convex relaxation of this algorithm is proposed in
Section~\ref{sec:l1}. We then derive equivalence conditions based
on block RIP. The concept of block RIP is further used to
establish robustness and stability of our algorithm in the
presence of noise and modelling errors. This approach is
specialized to MMV sampling in Section~\ref{sec:mmv}. Finally, in
Section~\ref{sec:prob} we prove that random ensembles tend to
satisfy the block RIP with high probability.

 Throughout the paper, we denote
vectors in an arbitrary Hilbert space $\HH$ by lower case letters
\eg $x$, and sets of vectors in $\HH$ by calligraphic letters, \eg
$\SSS$. Vectors in $\RR^N$ are written as boldface lowercase
letters \eg $\bx$, and matrices as boldface uppercase letters \eg
$\bba$. The identity matrix of appropriate dimension is written as
$\bbi$ or $\bbi_d$ when the dimension is not clear from the
context, and $\bba^T$ is the transpose of the matrix $\bba$. The
$i$th element of a vector $\bx$ is denoted by $\bx(i)$. Linear
transformations from $\RR^n$ to $\HH$ are written as upper case
letters $A:\RR^n \rightarrow \HH$. The adjoint of $A$ is written
as $A^*$. The standard Euclidean norm is denoted
$\|\bx\|_2=\sqrt{\bx^T\bx}$ and $\|\bx\|_1=\sum_i |\bx(i)|$ is the
$\ell_1$ norm of $\bx$. The Kronecker product between matrices
$\bba$ and $\bbb$ is denoted $\bba \otimes \bbb$. The following
variables are used in the sequel: $n$ is the number of samples,
$N$ is the length of the input signal $\bx$ when it is a vector,
$k$ is the sparsity or block sparsity (to be defined later on) of
a vector $\bc$, and $m$ is the number of subspaces. For ease of
notation we assume throughout that all scalars are defined over
the field of real numbers; however, the results are also valid
over the complex domain with appropriate modifications.

\section{Union of Subspaces}
\label{sec:union}

\subsection{Subspace Sampling}
Traditional sampling theory deals with the problem of recovering
an unknown signal $x \in \HH$ from a set of $n$ samples
$y_i=f_i(x)$ where $f_i(x)$ is some function of $x$. The signal
$x$ can be a function of time $x=x(t)$, or can represent a
finite-length vector $x=\bx$. The most common type of sampling is
linear sampling in which
\begin{equation}
y_i=\inner{s_i}{x},\quad 1 \leq i \leq n,
\end{equation}
for a set of functions $s_i \in \HH$
\cite{M89,U00,DV97,EW05,CE03,E02,E04s,EC04}.
 Here $\inner{x}{y}$
denotes the standard inner product on $\HH$. For example, if
$\HH=L_2$ is the space of real finite-energy signals then
\begin{equation}
\inner{x}{y}=\int_{ -\infty}^\infty x(t)y(t)dt.
\end{equation}
When $\HH=\RR^N$ for some $N$,
\begin{equation}
\inner{\bx}{\by}=\sum_{i=1}^N\bx(i)\by(i).
\end{equation}
Nonlinear sampling is treated in \cite{DE08}. However, here our
focus will be on the linear case.

When $\HH=\RR^N$ the unknown $x=\bx$ as well as the sampling
functions $s_i=\bs_i$ are vectors in $\RR^N$. Therefore, the
samples can be written conveniently in matrix form as
$\by=\bbs^T\bx$, where $\bbs$ is the matrix with columns $\bs_i$.
In the more general case in which $\HH=L_2$ or any other abstract
Hilbert space, we can use the set transformation notation in order
to conveniently represent the samples. A set transformation
$S:\RR^n\rightarrow\HH$ corresponding to sampling vectors $\{s_i
\in \HH,1 \leq i \leq n\}$ is defined by
\begin{equation}
S \bc=\sum_{i=1}^n \bc(i) s_i
\end{equation}
 for all $\bc\in \RR^n$. From the
definition of the adjoint, if $\bc=S^*x$, then
$\bc(i)=\inner{s_i}{x}$. Note that when $\HH=\RR^N$, $S=\bbs$ and
$S^*=\bbs^T$. Using this notation, we can always express the
samples as
\begin{equation}
\label{eq:samples} \by=S^*x,
\end{equation}
where $S$ is a set transformation for arbitrary $\HH$, and an
appropriate matrix when $\HH=\RR^N$.

 Our goal is to
recover $x$ from the samples $\by \in \RR^n$. If the vectors $s_i$
do not span the entire space $\HH$, then there are many possible
signals $x$ consistent with $\by$. More specifically, if we define
by $\SSS$ the sampling space spanned by the vectors $s_i$, then
clearly $S^*v=0$ for any $v \in \SSS^\perp$. Therefore, if
$\SSS^\perp$ is not the trivial space then adding such a vector
$v$ to any solution $x$ of (\ref{eq:samples}) will result in the
same samples $\by$.
 However, by
exploiting prior knowledge on $x$, in many cases uniqueness can be
guaranteed. A prior very often assumed is that $x$ lies in a given
subspace $\A$ of $\HH$ \cite{U00,V01,EM08,ED04}. If $\A$ and
$\SSS$ have the same finite dimension, and $\SSS^\perp$ and $\A$
intersect only at the $0$ vector, then  $x$ can be perfectly
recovered from the samples $\by$ \cite{ED04,EM08,ME09b}.

\subsection{Union of Subspaces}

When subspace information is available, perfect reconstruction can
often be guaranteed. Furthermore, recovery can be implemented by a
simple linear transformation of the given samples
(\ref{eq:samples}). However, there are many practical scenarios in
which we are given prior information about $x$ that is not
necessarily in the from of a subspace. One such case studied in
detail in \cite{ME09b} is that in which $x$ is known to be smooth.
Here we focus our attention on the setting where $x$ lies in a
union of subspaces
\begin{equation}
\label{eq:union} \U=\bigcup_i \V_i
\end{equation}
 where each $\V_i$ is a subspace. Thus, $x$
belongs to one of the $\V_i$, but we do not know a priori to which
one \cite{LD08,BD09}. Note that the set $\U$ is no longer a
subspace. Indeed, if $\V_i$ is, for example, a one-dimensional
space spanned by the vector $\bv_i$, then $\U$ contains vectors of
the form $\alpha \bv_i$ for some $i$ but does not include their
linear combinations. Our goal is to recover a vector $x$ lying in
a union of subspaces, from a given set of samples. In principle,
if we knew which subspace $x$ belonged to, then reconstruction can
be obtained using standard sampling results. However, here the
problem is more involved because conceptually we first need to
identify the correct subspace and only then can we recover the
signal within the space.

Previous work on sampling over a union focused on invertibility
and stability results \cite{LD08,BD09}. In contrast, here, our
main interest is in developing concrete recovery algorithms that
are provably robust. To achieve this goal, we limit our attention
to a subclass of (\ref{eq:union}) for which stable recovery
algorithms can be developed and analyzed. Specifically, we treat
the case in which each $\V_i$ has the additional structure
\begin{equation}
\label{eq:unionv} \V_i=\bigoplus_{|j|=k} \A_j,
\end{equation}
where $\{\A_j,1 \leq j \leq m\}$ are a given set of disjoint
subspaces, and $|j|=k$ denotes a sum over $k$ indices. Thus, each
subspace $\V_i$ corresponds to a different choice of $k$ subspaces
$\A_j$ that comprise the sum. We assume throughout the paper that
$m$ and
 the dimensions $d_i=\dim(\A_i)$ of the subspaces $\A_i$
are finite. Given $n$ samples
\begin{equation}
\by=S^*x
\end{equation}
and the knowledge that $x$ lies in exactly one of the subspaces
$\V_i$, we would like to recover the unknown signal $x$. In this
setting, there are $\binom{m}{k}$ possible subspaces comprising
the union.

An alternative interpretation of our model is as follows. Given an
observation vector $\by$, we seek a signal $x$ for which
$\by=S^*x$ and in addition $x$ can be written as
\begin{equation}
x=\sum_{i=1}^k x_i,
\end{equation}
where each $x_i$ lies in $\A_j$ for some index $j$.

 A special case is the standard CS problem in
which $x=\bx$ is a vector of length $N$, that has a sparse
representation in a given basis defined by an invertible matrix
$\bbw$. Thus, $\bx=\bbw \bc$ where $\bc$ is a sparse vector that
has at most $k$ nonzero elements. This fits our framework by
choosing $\A_i$ as the space spanned by the $i$th column of
$\bbw$. In this setting $m=N$, and there are $\binom{N}{k}$
subspaces comprising the union.

Another example is the block sparsity model \cite{SPH08,EB09} in
which $\bx$ is divided into equal-length blocks of size $d$, and
at most $k$ blocks can be non zero. Such a vector can be described
in our setting with $\HH=\RR^N$ by choosing $\A_i$ to be the space
spanned by the corresponding $i$ columns of the identity matrix.
Here $m=N/d$ and there are $\binom{N/d}{k}$ subspaces in the
union.

A final example is the MMV problem
\cite{Cotter,Chen,TroppI,TroppII,ME08} in which our goal is to
recover a matrix $\bbx$ from measurements $\bby=\bbm \bbx$, for a
given sampling matrix $\bbm$. The matrix $\bbx$ is assumed to have
at most $k$ non-zero rows. Thus, not only is each column $\bx_i$
$k$-sparse, but in addition the non-zero elements of $\bx_i$ share
a joint sparsity pattern. This problem can be transformed into
that of recovering a $k$-block sparse signal by stacking the rows
of $\bbx$ and $\bby$, leading to the relationship
\begin{equation}
\label{eq:mmvb} \mvec(\bby^T)=(\bbm \otimes \bbi)\mvec(\bbx^T).
\end{equation}
The structure of $\bbx$ leads to a vector $\mvec(\bbx^T)$ that is
$k$-block sparse.

\subsection{Problem Formulation and Main Results}

Given $k$ and the subspaces $\A_i$, we would like to address the
following questions:
\begin{enumerate}
\item What are the conditions on the sampling vectors $s_i,1 \leq
i \leq n$ in order to guarantee that the sampling is invertible
and stable? \item How can we recover the unique $x$ (regardless of
computational complexity)?\item How can we recover the unique $x$
in an efficient and stable manner?
\end{enumerate}
The first question was addressed in \cite{LD08,BD09} in the more
general context of unions of spaces (without requiring a
particular structure such as (\ref{eq:unionv})). However, no
concrete methods were proposed in order to recover $x$. Here we
provide efficient convex algorithms that recover $x$ in a stable
way for arbitrary $k$ under appropriate conditions on the sampling
functions $s_i$ and the spaces $\A_i$.

Our results are based on an equivalence between the union of
subspaces problem assuming (\ref{eq:unionv}) and that of
recovering block-sparse vectors. This allows us to recover $x$
from the given samples by first treating the problem of recovering
a block $k$-sparse vector $\bc$ from a given set of measurements.
This relationship is established in the next section. In the
reminder of the paper we therefore focus on the block $k$-sparse
model and develop our results in that context. In particular, we
introduce a block RIP condition that ensures uniqueness and
stability of our sampling problem. We then suggest an efficient
convex optimization problem which approximates an unknown
block-sparse vector $\bc$. Based on block RIP we prove that $\bc$
can be recovered exactly in a stable way using the proposed
optimization program. Furthermore, in the presence of noise and
modeling errors, our algorithm can approximate the best block-$k$
sparse solution.

\section{Connection with Block Sparsity}\label{sec:block}

 Consider the model of a signal $x$ in the union of $k$
out of $m$ subspaces $\A_i$, with $d_i=\dim(\A_i)$ as in
(\ref{eq:union}) and (\ref{eq:unionv}). To write $x$ explicitly,
we choose a basis for each $\A_i$. Denoting by $A_i:\RR^{d_i}
\rightarrow \HH$ the set transformation corresponding to a basis
for $\A_i$, any such $x$ can be written as
\begin{equation}
\label{eq:models} x=\sum_{|i|=k} A_i \bc_i,
\end{equation}
where $\bc_i \in \RR^{d_i}$ are the representation coefficients in
$\A_i$, and  $|i|=k$ denotes a sum over a set of $k$ indices. The
choice of indices depend on the signal $x$ and are unknown in
advance.

To develop the equivalence with block sparsity, it is useful to
introduce some further notation. First, we define $A:\RR^N
\rightarrow \HH$ as the set transformation that is a result of
concatenating the different $A_i$, with
\begin{equation}
\label{eq:N} N=\sum_{i=1}^m d_i.
\end{equation}
Next, we define the $i$th sub-block $\bc[i]$ of a length-$N$
vector $\bc$ over $\I=\{d_1,\ldots,d_m\}$. The $i$th sub-block is
of length $d_i$, and the blocks are formed sequentially so that
\begin{equation}
\label{eq:xblock} \bc^T=[\underbrace{c_1 \,\, \ldots \,\,
c_{d_1}}_{\bc[1]} \,\, \ldots\,\,
\underbrace{c_{N-d_m+1}\,\,\ldots \,\,c_{N}}_{\bc[m]}]^T.
\end{equation}
We can then define $A$ by
\begin{equation}
\label{eq:A} A \bc=\sum_{i=1}^m A_i \bc[i].
\end{equation}
When $\HH=\RR^N$ for some $N$, $A_i=\bba_i$ is a matrix and
$A=\bba$ is the matrix obtained by column-wise concatenating
$\bba_i$. If for a given $x$ the $j$th subspace $\A_j$ does not
appear in the sum (\ref{eq:unionv}), or equivalently in
(\ref{eq:models}), then $\bc[j]=\bo$.

Any $x$ in the union (\ref{eq:union}), (\ref{eq:unionv}) can be
represented in terms of $k$ of the bases $A_i$. Therefore, we can
write $x=A\bc$ where there are at most $k$ non-zero blocks
$\bc[i]$. Consequently, our union model is equivalent to the model
in which $x$ is represented by a sparse vector $\bc$ in an
appropriate basis. However, the sparsity pattern here has a unique
form which we will exploit in our conditions and algorithms: the
non-zero elements appear in blocks.
\begin{definition} A
vector $\bc \in \RR^N$ is called block $k$-sparse over
$\I=\{d_1,\ldots,d_m\}$ if $\bc[i]$ is nonzero for at most $k$
indices $i$ where $N=\sum_i d_i$.
\end{definition}
An example of a block-sparse vector with $k=2$ is depicted in
Fig.~\ref{FigBlockSparsity}.
\begin{figure}[h]
\centering
\includegraphics[scale=1]{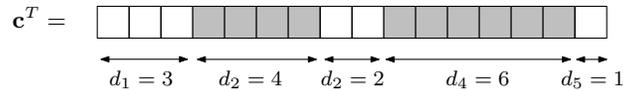}
\caption{A block-sparse vector $\bc$ over
$\I=\{d_1,\dots,d_5\}$. The gray areas represent
10 non-zero entries which occupy two
blocks.}\label{FigBlockSparsity}
\end{figure}
 When $d_i=1$ for each $i$, block
sparsity reduces to the conventional definition of a sparse
vector. Denoting
\begin{equation}
\|\bc\|_{0,\I} = \sum_{i=1}^m I(\|\bc[i]\|_2>0),
\end{equation}
where $I(\|\bc[i]\|_2>0)$ is an indicator function that obtains
the value $1$ if $\|\bc[i]\|_2>0$ and $0$ otherwise, a block
$k$-sparse vector $\bc$ can be defined by $\|\bc\|_{0,\I} \leq k$.

Evidently, there is a one-to-one correspondence between a vector
$x$ in the union, and a block-sparse vector $\bc$. The
measurements (\ref{eq:samples}) can also be represented explicitly
in terms of $\bc$ as
\begin{equation}
\label{eq:samplesd} \by=S^*x=S^*A\bc=\bbd\bc,
\end{equation}
where $\bbd$ is the $n \times N$ matrix defined by
\begin{equation}
\label{eq:D} \bbd=S^*A.
\end{equation}
We can therefore phrase our problem in terms of $\bbd$ and $\bc$
as that of recovering a block-$k$ sparse vector $\bc$ over $\I$
from the measurements (\ref{eq:samplesd}).

Note that the choice of basis $A_i$ for each subspace does not
affect our model. Indeed, choosing alternative bases will lead to
$x=A\bbw \bc$ where $\bbw$ is a block diagonal matrix with blocks
of size $d_i$. Defining $\tilde{\bc}=\bbw\bc$, the block sparsity
pattern of $\tilde{\bc}$ is equal to that of $\bc$.

Since our problem is equivalent to that of recovering a block
sparse vector over $\I$ from linear measurements $\by=\bbd\bc$, in
the reminder of the paper we focus our attention on this problem.

\section{Uniqueness and Stability}
\label{sec:unique}

In this section we study the uniqueness and stability of our
sampling method. These properties are intimately related to the
RIP, which we generalize here to the block-sparse setting.

 The first question we
address is that of uniqueness, namely conditions under which a
block-sparse vector $\bc$ is uniquely determined by the
measurement vector $\by=\bbd\bc$.
\begin{proposition}
\label{prop:inv} There is a unique block-$k$ sparse vector $\bc$
consistent with the measurements $\by=\bbd\bc$ if and only if
$\bbd \bc \neq 0$ for every $\bc \neq 0$ that is block
$2k$-sparse.
\end{proposition}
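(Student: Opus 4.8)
The plan is to prove both directions via the contrapositive, using the standard linearity argument that underlies all such uniqueness statements. The key observation is that the difference of two block-$k$-sparse vectors is block-$2k$-sparse: if $\bc_1$ and $\bc_2$ are each block-$k$-sparse over $\I$, then $\bc_1 - \bc_2$ has nonzero sub-blocks only on the union of the two supporting index sets, which contains at most $2k$ indices, so $\|\bc_1-\bc_2\|_{0,\I} \le 2k$.

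For the ``only if'' direction, I would argue contrapositively: suppose there exists a nonzero block-$2k$-sparse vector $\bd$ with $\bbd\bd = \bo$. I then need to exhibit two distinct block-$k$-sparse vectors with the same measurements. Since $\bd$ is block-$2k$-sparse, its support (as a set of sub-block indices) can be partitioned into two disjoint sets each of size at most $k$; write $\bd = \bc_1 - \bc_2$ accordingly, where $\bc_1$ collects the sub-blocks of $\bd$ on the first index set (and is zero elsewhere) and $\bc_2 = -(\text{the sub-blocks on the second set})$, so that each of $\bc_1,\bc_2$ is block-$k$-sparse. Then $\bbd\bc_1 - \bbd\bc_2 = \bbd\bd = \bo$, so $\bbd\bc_1 = \bbd\bc_2$, while $\bc_1 \ne \bc_2$ since $\bd \ne \bo$. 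This contradicts uniqueness.

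For the ``if'' direction, again contrapositively: suppose uniqueness fails, so there exist distinct block-$k$-sparse vectors $\bc_1 \ne \bc_2$ with $\bbd\bc_1 = \bbd\bc_2$. Set $\bd = \bc_1 - \bc_2$. Then $\bd \ne \bo$, and by the observation above $\bd$ is block-$2k$-sparse, yet $\bbd\bd = \bbd\bc_1 - \bbd\bc_2 = \bo$, contradicting the hypothesis.

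I do not anticipate a genuine obstacle here; the proposition is the block-sparse analogue of the elementary $2k$-sparse uniqueness fact from standard compressed sensing, and the only thing requiring a small amount of care is the bookkeeping in the ``only if'' direction, namely checking that a block-$2k$-sparse vector really does split as a difference of two block-$k$-sparse vectors. This is immediate once one notes that sub-blocks occupy disjoint coordinate ranges by construction (\ref{eq:xblock}), so zeroing out some sub-blocks of $\bd$ and keeping others produces well-defined block-sparse vectors whose difference recovers $\bd$.
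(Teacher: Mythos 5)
Your proof is correct and complete. The paper itself does not give a direct argument: it simply cites Proposition~4 of the earlier work of Lu and Do on sampling from general unions of subspaces, which establishes the invertibility criterion in the broader setting of an arbitrary union $\bigcup_i \V_i$ (where the role of ``block $2k$-sparse'' is played by the sums $\V_i + \V_j$). Your self-contained argument specializes exactly that idea to the structured union at hand: the forward direction rests on the fact that the difference of two block-$k$-sparse vectors is block-$2k$-sparse, and the converse rests on the fact that any block-$2k$-sparse vector splits as such a difference by partitioning its at most $2k$ supporting block indices into two sets of size at most $k$. Both halves are handled correctly, and your remark that the splitting is unambiguous because the sub-blocks occupy disjoint coordinate ranges is the right (and only) point needing care. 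What the citation buys the paper is generality and brevity; what your direct proof buys is a reader-verifiable two-line argument that does not require unpacking the union-of-subspaces formalism of the referenced work. Either is acceptable; yours is arguably preferable in a self-contained exposition.
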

\begin{proof}
The proof follows from \cite[Proposition 4]{LD08}.
\end{proof}

We next address the issue of stability. A sampling operator is
stable for a set $\T$ if and only if there exists constants
$\alpha>0$, $\beta<\infty$ such that
\begin{equation}\label{eq:stablity}
\alpha \|x_1-x_2\|^2_\HH \leq  \|S^*x_1-S^*x_2\|^2_2 \leq  \beta
\|x_1-x_2\|^2_\HH,
\end{equation}
for every $x_1,x_2$ in $\T$. The ratio $\kappa=\beta/\alpha$
provides a measure for stability of the sampling operator. The
operator is maximally stable when $\kappa=1$. In our setting,
$S^*$ is  replaced by $\bbd$, and the set $\T$ contains block-$k$
sparse vectors. The following proposition follows immediately from
(\ref{eq:stablity}) by noting that given two block-$k$ sparse
vectors $\bc_1,\bc_2$ their difference $\bc_1-\bc_2$ is block-$2k$
sparse.
\begin{proposition}
\label{prop:stab} The measurement matrix $\bbd$ is stable for
every block $k$-sparse vector $\bc$ if and only if there exists
$C_1>0$ and $C_2<\infty$ such that
\begin{equation}
\label{eq:stab} C_1 \|\bv\|^2_2 \leq \|\bbd\bv\|^2_2 \leq C_2
\|\bv\|^2_2,
\end{equation}
for every $\bv$ that is block $2k$-sparse.
\end{proposition}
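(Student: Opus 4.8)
The plan is to derive the equivalence in Proposition~\ref{prop:stab} directly from the stability definition~(\ref{eq:stablity}), exploiting the fact already noted in the text that the difference of two block-$k$ sparse vectors is block-$2k$ sparse. First I would establish the forward direction: assume $\bbd$ is stable for every block-$k$ sparse vector, meaning~(\ref{eq:stablity}) holds with $S^*$ replaced by $\bbd$ and $\T$ the set of block-$k$ sparse vectors. Given any block-$2k$ sparse $\bv$, the main observation is that $\bv$ can be written as $\bv=\bc_1-\bc_2$ with $\bc_1,\bc_2$ each block-$k$ sparse: simply split the (at most $2k$) nonzero blocks of $\bv$ into two groups of size at most $k$, let $\bc_1$ carry the first group and $-\bc_2$ carry the second. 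Then $\|\bbd\bv\|_2^2=\|\bbd\bc_1-\bbd\bc_2\|_2^2$ and $\|\bv\|_2^2=\|\bc_1-\bc_2\|_2^2$, so~(\ref{eq:stab}) follows immediately from~(\ref{eq:stablity}) with $C_1=\alpha$ and $C_2=\beta$ (using that here $\HH=\RR^N$ so $\|\cdot\|_\HH=\|\cdot\|_2$).

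For the converse, I would assume~(\ref{eq:stab}) holds for every block-$2k$ sparse $\bv$ and verify~(\ref{eq:stablity}) on the set $\T$ of block-$k$ sparse vectors. Take any $\bc_1,\bc_2\in\T$; their difference $\bv=\bc_1-\bc_2$ has nonzero blocks only where $\bc_1$ or $\bc_2$ is nonzero, hence at most $2k$ nonzero blocks, i.e.\ $\bv$ is block-$2k$ sparse. Applying~(\ref{eq:stab}) to this $\bv$ and again using linearity of $\bbd$ yields $C_1\|\bc_1-\bc_2\|_2^2\le\|\bbd\bc_1-\bbd\bc_2\|_2^2\le C_2\|\bc_1-\bc_2\|_2^2$, which is exactly~(\ref{eq:stablity}) with $\alpha=C_1$, $\beta=C_2$. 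This shows $\bbd$ is stable for every block-$k$ sparse vector.

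The only point requiring a sentence of care is the splitting step in the forward direction: one must check that an arbitrary block-$2k$ sparse vector really does decompose as a difference of two block-$k$ sparse vectors whose supports are disjoint subsets of its own block-support. This is elementary — choose any partition of the (at most $2k$) active block-indices into two subsets each of size at most $k$, and restrict $\bv$ to each — but it is the one place where the block structure (as opposed to a general sparsity structure) is used, and it is worth stating explicitly since~(\ref{eq:stab}) is phrased purely as a restricted-isometry-type bound. I do not anticipate a genuine obstacle here; the proposition is essentially a restatement of the observation immediately preceding it, packaged as an if-and-only-if, and the proof is a two-line symmetric argument in each direction.
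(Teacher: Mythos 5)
Your proof is correct and follows essentially the same route as the paper, which simply observes that the difference of two block-$k$ sparse vectors is block-$2k$ sparse and declares the proposition immediate from the stability definition (\ref{eq:stablity}). Your additional care in the forward direction — explicitly decomposing a block-$2k$ sparse $\bv$ as a difference of two block-$k$ sparse vectors by partitioning its active blocks — fills in a step the paper leaves implicit, and is the right thing to check.
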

It is easy to see that if $\bbd$ satisfies (\ref{eq:stab}) then
$\bbd\bc \neq \bo$ for all block $2k$-sparse vectors $\bc$.
Therefore, this condition implies both invertibility and
stability.

\subsection{Block RIP}

Property (\ref{eq:stab}) is related to the RIP used in several
previous works in CS \cite{CRT06,CRT06_2,C08}. A matrix $\bbd$ of
size $n \times N$ is said to have the RIP if there exists a
constant $\delta_k\in[0,1)$ such that for every $k$-sparse $\bc
\in \RR^N$,
\begin{equation}
\label{eq:ripo} (1-\delta_k)\|\bc\|_2^2 \leq \|\bbd\bc\|_2^2 \leq
(1+\delta_k) \|\bc\|_2^2.
\end{equation}
Extending this property to block-sparse vectors leads to the
following definition:
\begin{definition}
Let $\bbd:\RR^N \rightarrow \RR^n$ be a given matrix. Then $\bbd$
has the block RIP over $\I=\{d_1,\ldots,d_m\}$ with
parameter $\delta_{k|\I}$ if for every $\bc \in \RR^N$ that is block
$k$-sparse over $\I$ we have that
\begin{equation}
\label{eq:rip} (1-\delta_{k|\I})\|\bc\|_2^2 \leq \|\bbd\bc\|_2^2
\leq (1+\delta_{k|\I}) \|\bc\|_2^2.
\end{equation}
\end{definition}
By abuse of notation, we use $\delta_k$ for the block-RIP constant
$\delta_{k|\I}$ when it is clear from the context that we refer to
blocks. Block-RIP is a special case of the $\A$-restricted
isometry defined in \cite{BD09}. From Proposition~\ref{prop:inv}
it follows that if $\bbd$ satisfies the RIP (\ref{eq:rip}) with
$\delta_{2k}<1$, then there is a unique block-sparse vector $\bc$
consistent with (\ref{eq:samplesd}).

Note that a block $k$-sparse vector over $\I$ is $M$-sparse in the
conventional sense where $M$ is the sum of the $k$ largest values
in $\I$, since it has at most $M$ nonzero elements. If we require
$\bbd$ to satisfy RIP for all $M$-sparse vectors, then
(\ref{eq:rip}) must hold for all $2M$-sparse vectors $\bc$. Since
we only require the RIP for block sparse signals, (\ref{eq:rip})
only has to be satisfied for a certain subset of $2M$-sparse
signals, namely those that have block sparsity. As a result, the
block-RIP constant $\delta_{k|\I}$ is typically smaller than
$\delta_{M}$ (where $M$ depends on $k$; for blocks with equal size
$d$, $M=kd$).

To emphasize the advantage of block RIP over standard RIP,
consider the following matrix, separated into three blocks of two
columns each:
\begin{equation}\label{matD}
\bbD = \left(
         \begin{array}{cc|cc|cc}
           -1 & 1 &  0 &  0 &  0 & 1 \\
            0 & 2 & -1 &  0 &  0 & 3 \\
            0 & 3 &  0 & -1 &  0 & 1 \\
            0 & 1 &  0 &  0 & -1 & 1 \\
         \end{array}
       \right)\cdot \bbB,
\end{equation}
where $\bbB$ is a diagonal matrix that results in unit-norm
columns of $\bbD$, \ie $\bbB=\diag(1,15,1,1,1,12)^{-1/2}$.
 In
this example $m=3$ and $\I=\{d_1=2,d_2=2,d_3=2\}$. Suppose that
$\bc$ is block-1 sparse, which corresponds to at most two non-zero
values. Brute-force calculations show that the smallest value of
$\delta_2$ satisfying the standard RIP (\ref{eq:ripo}) is
$\delta_2 = 0.866$. On the other hand, the block-RIP
(\ref{eq:rip}) corresponding to the case in which the two non-zero
elements are restricted to occur in one block is satisfied with
$\delta_{1|\I} = 0.289$. Increasing the number of non-zero
elements to $k=4$, we can verify that the standard RIP
(\ref{eq:ripo}) does not hold for any $\delta_4\in[0,1)$. Indeed,
in this example there exist two $4$-sparse vectors that result in
the same measurements. In contrast, $\delta_{2|\I}=0.966$
satisfies the lower bound in (\ref{eq:rip}) when restricting the
$4$ non-zero values to  two blocks. Consequently, the measurements
$\by = \bbD \bc$ uniquely specify a single block-sparse $\bc$.

In the next section, we will see that the ability to recover $\bc$
in a computationally efficient way depends on the constant
$\delta_{2k|\I}$ in the block RIP (\ref{eq:rip}). The smaller the
value of $\delta_{2k|\I}$, the fewer samples are needed in order
to guarantee stable recovery. Both standard and block RIP
constants $\delta_k,\delta_{k|\I}$ are by definition increasing
with $k$. Therefore, it was suggested in \cite{CT05} to normalize
each of the columns of $\bbD$ to 1, so as to start with
$\delta_1=0$. In the same spirit, we recommend choosing the bases
for $\A_i$ such that $\bbd=S^*A$ has unit-norm columns,
corresponding to $\delta_{1|I}=0$.

\subsection{Recovery Method}

We have seen that if $\bbd$ satisfies the RIP (\ref{eq:rip}) with
$\delta_{2k}<1$, then there is a unique block-sparse vector $\bc$
consistent with (\ref{eq:samplesd}). The question is how to find
$\bc$ in practice. Below we present an algorithm that will in
principle find the unique $\bc$ from the samples $\by$.
Unfortunately, though, it has exponential complexity. In the next
section we show that under a stronger condition on $\delta_{2k}$
we can recover $\bc$ in a stable and efficient manner.

 Our
first claim is that $\bc$ can be uniquely recovered by solving the
optimization problem
\begin{eqnarray}
\label{eq:l0}
\min_{\bc} && \|\bc\|_{0,\I} \nonumber \\
\st && \by=\bbd\bc.
\end{eqnarray}
 To show that (\ref{eq:l0}) will
indeed recover the true value of $\bc$, suppose that there exists
a $\bc'$ such that $\bbd\bc'=\by$ and $ \|\bc'\|_{0,\I} \leq
\|\bc\|_{0,\I} \leq k$. Since both $\bc,\bc'$ are consistent with
the measurements,
\begin{equation}
\bo=\bbd(\bc-\bc')=\bbd\bd,
\end{equation}
where $\|\bd\|_{0,\I} \leq 2k$ so that $\bd$ is a block
$2k$-sparse vector. Since $\bbd$ satisfies (\ref{eq:rip}) with
$\delta_{2k}<1$, we must have that $\bd=\bo$ or $\bc=\bc'$.

In principle (\ref{eq:l0}) can be solved by searching over all
possible sets of $k$ blocks whether there exists a $\bc$ that is
consistent with the measurements. The invertibility condition
(\ref{eq:rip}) ensures that there is only one such $\bc$. However,
clearly this approach is not efficient.

\section{Convex Recovery Algorithm}\label{sec:l1}

\subsection{Noise-Free Recovery}

We now develop an efficient convex optimization problem instead of
(\ref{eq:l0}) to approximate $\bc$. As we show, if $\bbd$
satisfies (\ref{eq:rip}) with a small enough value of
$\delta_{2k}$, then the method we propose will recover $\bc$
exactly.

Our approach is to minimize the sum of the energy of the blocks
$\bc[i]$. To write down the problem explicitly, we define the
mixed $\ell_2/\ell_1$ norm over the index set
$\I=\{d_1,\ldots,d_m\}$ as
\begin{equation}
\|\bc\|_\mixed = \sum_{i=1}^m \|\bc[i] \|_2.
\end{equation}
The algorithm we suggest is then
\begin{eqnarray}\label{mainalg}
\label{eq:l1}
\min_{\bc} && \|\bc\|_{2,\I} \nonumber \\
\st && \by=\bbd\bc.
\end{eqnarray}
Problem (\ref{eq:l1}) can be written as an SOCP by defining
$t_i=\|\bc[i]\|_2$. Then (\ref{eq:l1}) is equivalent to
\begin{eqnarray}
\label{eq:socp}
\min_{\bc,t_i} && \sum_{i=1}^m t_i \nonumber \\
\st && \by=\bbd\bc \nonumber \\
&& t_i \geq \|\bc[i]\|_2,\quad 1 \leq i \leq m \nonumber \\
&& t_i \geq 0,\quad 1 \leq i \leq m,
\end{eqnarray}
which can be solved using standard software packages.

The next theorem establishes that the solution to (\ref{eq:l1}) is
the true $\bc$ as long as $\delta_{2k}$ is small enough.
\begin{theorem}
\label{thm:rip} Let $\by=\bbd \bc_0$ be measurements of a block
$k$-sparse vector $\bc_0$. If $\bbd$ satisfies the block RIP
(\ref{eq:rip}) with $\delta_{2k}<\sqrt{2}-1$ then
\begin{enumerate}
\item there is a unique block-$k$ sparse vector $\bc$ consistent
with $\by$; \item the SOCP (\ref{eq:socp}) has a unique solution;
\item the solution to the SOCP is equal to $\bc_0$.
\end{enumerate}
\end{theorem}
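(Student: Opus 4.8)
The plan is to adapt the now-classical RIP-based argument of Cand\`es (used for standard sparse recovery with $\delta_{2k}<\sqrt2-1$) to the block setting, replacing ``coordinates'' by ``blocks'' and the $\ell_1$ norm by the mixed $\ell_2/\ell_1$ norm $\|\cdot\|_{2,\I}$ throughout. Uniqueness of the block-$k$ sparse vector consistent with $\by$ follows immediately from Proposition~\ref{prop:inv} together with the observation, already noted in the excerpt, that $\delta_{2k}<1$ forces $\bbd\bv\neq\bo$ for every nonzero block-$2k$ sparse $\bv$. So the real content is parts (2) and (3): showing that any minimizer of the SOCP coincides with $\bc_0$, which also yields uniqueness of the minimizer.

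First I would set $\bc=\bc_0+\bh$ for an arbitrary feasible perturbation, so $\bbd\bh=\bo$, and argue that $\bh=\bo$. Let $T_0$ be the set of (at most $k$) block-indices on which $\bc_0$ is supported. The key combinatorial step is to sort the blocks of $\bh$ outside $T_0$ by their $\ell_2$-norms and partition them into consecutive groups $T_1,T_2,\dots$ of $k$ blocks each, in decreasing order of block-energy. Using the optimality of $\bc_0+\bh$, i.e. $\|\bc_0+\bh\|_{2,\I}\le\|\bc_0\|_{2,\I}$, and the triangle inequality applied block-by-block, one gets the cone constraint $\|\bh_{T_0^c}\|_{2,\I}\le\|\bh_{T_0}\|_{2,\I}$, exactly as in the scalar case but with block-norms. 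Next, the decreasing-rearrangement bookkeeping gives $\sum_{j\ge2}\|\bh_{T_j}\|_2\le k^{-1/2}\|\bh_{T_0^c}\|_{2,\I}$; here it is essential that block RIP inequality~(\ref{eq:rip}) controls $\|\bbd\bh_{T_j}\|_2$ in terms of $\|\bh_{T_j}\|_2$ since each $\bh_{T_j}$ is block-$k$ sparse over $\I$, and that the near-orthogonality estimate $|\langle \bbd\bu,\bbd\bv\rangle|\le\delta_{2k}\|\bu\|_2\|\bv\|_2$ holds for $\bu,\bv$ supported on disjoint block-sets whose union is block-$2k$ sparse.

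With these two ingredients in hand, I would expand $0=\|\bbd\bh\|_2^2=\langle\bbd\bh_{T_{01}},\bbd\bh_{T_{01}}\rangle+\sum_{j\ge2}\langle\bbd\bh_{T_{01}},\bbd\bh_{T_j}\rangle$ where $T_{01}=T_0\cup T_1$, apply the lower block-RIP bound to the first term and the cross-term bound to the rest, and conclude $(1-\delta_{2k})\|\bh_{T_{01}}\|_2^2\le\sqrt2\,\delta_{2k}\|\bh_{T_0}\|_2\sum_{j\ge2}\|\bh_{T_j}\|_2$. Feeding in the rearrangement bound and the cone constraint collapses this to $(1-\delta_{2k})\|\bh_{T_{01}}\|_2\le\sqrt2\,\delta_{2k}\|\bh_{T_{01}}\|_2$, so that whenever $\delta_{2k}<\sqrt2-1$ we must have $\|\bh_{T_{01}}\|_2=0$, and then the cone constraint forces $\bh=\bo$. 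Since this holds for every feasible perturbation, $\bc_0$ is the unique solution of~(\ref{eq:socp}), establishing (2) and (3).

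I expect the main obstacle to be verifying that the two workhorse estimates of the scalar theory genuinely carry over: (i) the block-wise triangle inequality producing the cone constraint, which is routine; and (ii) the ``near-orthogonality'' inequality $|\langle\bbd\bu,\bbd\bv\rangle|\le\delta_{k+k'}\|\bu\|_2\|\bv\|_2$ for vectors $\bu,\bv$ with disjoint block-supports of sizes $k,k'$. The latter follows from the parallelogram identity applied to $\bu/\|\bu\|_2\pm\bv/\|\bv\|_2$, each of which is block-$(k+k')$ sparse over $\I$, so~(\ref{eq:rip}) applies --- the only subtlety is checking that a sum of two vectors with disjoint block-supports is again block-sparse over the \emph{same} partition $\I$, which is immediate from the definition of the sub-blocks $\bc[i]$. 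Once these are in place the inequality-chasing is essentially identical to the $\delta_{2k}<\sqrt2-1$ proof in the compressed-sensing literature, with scalar magnitudes $|h_i|$ uniformly replaced by block-energies $\|\bh[i]\|_2$.
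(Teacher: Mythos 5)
Your proposal is correct and follows essentially the same route as the paper's own proof: the decomposition of $\bh$ into sorted groups of $k$ blocks, the cone constraint from optimality of the mixed norm, the rearrangement bound $\sum_{j\ge 2}\|\bh_{T_j}\|_2\le k^{-1/2}\|\bh_{T_0^c}\|_{2,\I}$, the parallelogram-identity near-orthogonality lemma, and the final collapse to $(1-\delta_{2k})\le\sqrt{2}\,\delta_{2k}$ all match the paper's adaptation of Cand\`es's argument. The only blemishes are cosmetic: your displayed expansion should read $\langle\bbd\bh_{T_{01}},\bbd\bh\rangle=0$ rather than $\|\bbd\bh\|_2^2=\dots$, and the factor $\|\bh_{T_0}\|_2$ in your intermediate bound should be $\|\bh_{T_{01}}\|_2$, neither of which affects the conclusion.
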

Before proving the theorem we note that it provides a gain over
standard CS results. Specifically, it is shown in \cite{C08} that
if $\bc$ is $k$-sparse and the measurement matrix $\bbd$ satisfies
the standard RIP with $\delta_{2k}<\sqrt{2}-1$, then $\bc$ can be
recovered exactly from the measurements $\by=\bbd\bc$ via the
linear program:
\begin{eqnarray}
\label{eq:l1cs}
\min_{\bc} && \|\bc\|_1 \nonumber \\
\st && \by=\bbd\bc.
\end{eqnarray}
Since any block $k$-sparse vector is also $M$-sparse with $M$
equal to the sum of the $k$ largest values of $d_i$, we can find
$\bc_0$ of Theorem~\ref{thm:rip} by solving (\ref{eq:l1cs}) if
$\delta_{2M}$ is small enough.  However, this standard CS approach
does not exploit the fact that the non-zero values appear in
blocks, and not in arbitrary locations within the vector $\bc_0$.
On the other hand, the SOCP (\ref{eq:socp}) explicitly takes the
block structure of $\bc_0$ into account. Therefore, the condition
of Theorem~\ref{thm:rip} is not as stringent as that obtained by
using equivalence results with respect to (\ref{eq:l1cs}). Indeed,
the block RIP (\ref{eq:rip}) bounds the norm of $\|\bbd\bc\|$ over
block sparse vectors $\bc$, while the standard RIP considers all
possible choices of $\bc$, also those that are not $2k$-block
sparse. Therefore, the value of $\delta_{2k}$ in (\ref{eq:rip})
can be lower than that obtained from (\ref{eq:ripo}) with $k=2M$,
as we illustrated by an example in Section~\ref{sec:block}. This
advantage will also be seen in the context of a concrete example
at the end of the section.

Our proof below is rooted in that of \cite{C08}. However, some
essential modifications are necessary in order to adapt the
results to the block-sparse case.  These differences are a result
of the fact that our algorithm relies on the mixed $\ell_2/\ell_1$
norm rather than the $\ell_1$ norm alone. This adds another layer
of complication to the proof, and therefore we expand the
derivations in more detail than in \cite{C08}.

\begin{proof}
 We first
note that $\delta_{2k}<1$ guarantees uniqueness of $\bc_0$ from
Proposition~\ref{prop:inv}. To prove parts 2) and 3) we show that
any solution to (\ref{eq:l1}) has to be equal to $\bc_0$. To this
end let $\bc'=\bc_0+\bh$ be a solution of (\ref{eq:l1}). The true
value $\bc_0$ is non-zero over at most $k$ blocks. We denote by
$\I_0$ the block indices for which $\bc_0$ is nonzero, and by
$\bh_{\I_0}$ the restriction of $\bh$ to these blocks. Next we
decompose $\bh$ as
\begin{equation}
\bh=\sum_{i=0}^{\ell-1} \bh_{\I_i},
\end{equation}
where $\bh_{\I_i}$ is the restriction of $\bh$ to the set $\I_i$
which consists of $k$ blocks, chosen such that the norm of
$\bh_{\I_0^c}$ over $\I_1$  is largest, the norm over $\I_2$ is
second largest and so on. Our goal is to show that $\bh=\bo$. We
prove this by noting that
\begin{equation}
\label{eq:normh} \|\bh\|_2 =\|\bh_{\I_0 \cup \I_1}+\bh_{(\I_0 \cup
\I_1)^c}\|_2\leq \|\bh_{\I_0 \cup \I_1}\|_2+\|\bh_{(\I_0 \cup
\I_1)^c}\|_2.
\end{equation}
In the first part of the proof we show that $\|\bh_{(\I_0 \cup
\I_1)^c}\|_2 \leq \|\bh_{\I_0 \cup \I_1}\|_2$. In the second part
we establish that $\|\bh_{\I_0 \cup \I_1}\|_2=0$, which completes
the proof.

{\it Part I:$\|\bh_{(\I_0 \cup \I_1)^c}\|_2 \leq \|\bh_{\I_0 \cup
\I_1}\|_2$}

We begin by noting that
\begin{equation}
\label{eq:hcn} \|\bh_{(\I_0 \cup
\I_1)^c}\|_2=\left\|\sum_{i=2}^{\ell-1} \bh_{\I_i}\right\|_2 \leq
\sum_{i=2}^{\ell-1}\|\bh_{\I_i}\|_2.
\end{equation}
Therefore, it is sufficient to bound $\|\bh_{\I_i}\|_2$ for $i
\geq 2$. Now,
\begin{equation}
\label{eq:l2n} \|\bh_{\I_i}\|_2 \leq k^{1/2}
\|\bh_{\I_i}\|_{\infty,\I} \leq k^{-1/2}
\|\bh_{\I_{i-1}}\|_\mixed,
\end{equation}
where we defined $\|\ba\|_{\infty,\I}=\max_i \|\ba[i]\|_2$. The
first inequality follows from the fact that for any block
$k$-sparse $\bc$,
\begin{equation}
\|\bc\|_2^2=\sum_{|i|=k} \|\bc[i]\|_2^2 \leq k
\|\bc\|_{\infty,\I}^2.
\end{equation}
The second inequality in (\ref{eq:l2n}) is a result of the fact
that the norm of each block in $\bh_{\I_i}$ is by definition
smaller or equal to the norm of each block in $\bh_{\I_{i-1}}$.
Since there are at most $k$ nonzero blocks,
$k\|\bh_{\I_i}\|_{\infty,\I} \leq \|\bh_{\I_{i-1}}\|_\mixed$.
Substituting (\ref{eq:l2n}) into (\ref{eq:hcn}),
\begin{equation}
\label{eq:sum} \|\bh_{(\I_0 \cup \I_1)^c}\|_2 \leq
 k^{-1/2}
\sum_{i=1}^{\ell-2}\|\bh_{\I_{i}}\|_\mixed \leq  k^{-1/2}
\sum_{i=1}^{\ell-1}\|\bh_{\I_{i}}\|_\mixed=k^{-1/2}
\|\bh_{\I_0^c}\|_\mixed,
\end{equation}
where the equality is a result of the fact that
$\|\bc_1+\bc_2\|_\mixed=\|\bc_1\|_\mixed+\|\bc_2\|_\mixed$ if
$\bc_1$ and $\bc_2$ are non-zero on disjoint blocks.

To develop a bound on  $\|\bh_{\I_0^c}\|_\mixed$ note that  since
$\bc'$ is a solution to (\ref{eq:l1}), $\|\bc_0\|_\mixed \geq
\|\bc'\|_\mixed$. Using the fact that
$\bc'=\bc_0+\bh_{\I_0}+\bh_{\I_0^c}$ and $\bc_0$ is supported on
$\I_0$ we have
\begin{equation}
\label{eq:c2i} \|\bc_0\|_\mixed \geq
\|\bc_0+\bh_{\I_0}\|_\mixed+\|\bh_{\I_0^c}\|_\mixed \geq
\|\bc_0\|_\mixed-\|\bh_{\I_0}\|_\mixed+\|\bh_{\I_0^c}\|_\mixed,
\end{equation}
from which we conclude that
\begin{equation}
\label{eq:hineq}  \|\bh_{\I_0^c}\|_\mixed \leq
\|\bh_{\I_0}\|_\mixed \leq k^{1/2}\|\bh_{\I_0}\|_2.
\end{equation}
The last inequality follows from applying Cauchy-Schwarz to any
block $k$-sparse vector $\bc$:
\begin{equation}
\label{eq:cs} \|\bc\|_\mixed =\sum_{|i|=k} \|\bc[i]\|_2 \cdot 1
\leq k^{1/2}\|\bc\|_2.
\end{equation}
Substituting (\ref{eq:hineq}) into (\ref{eq:sum}):
\begin{equation}
\label{eq:h2c} \|\bh_{(\I_0 \cup \I_1)^c}\|_2 \leq
\|\bh_{\I_0}\|_2 \leq  \|\bh_{\I_0 \cup \I_1}\|_2,
\end{equation}
which completes the first part of the proof.

{\it Part II:$\|\bh_{\I_0 \cup \I_1}\|_2 =0$}

We next show that $\bh_{\I_0 \cup \I_1}$ must be equal to $\bo$.
In this part we invoke the RIP.

 Since $\bbd\bc_0=\bbd\bc'=\by$, we have $\bbd\bh=\bo$. Using
 the fact that $\bh=\bh_{\I_0 \cup \I_1}+\sum_{i \geq 2} \bh_{\I_i}
 $,
 \begin{equation}
 \label{eq:isum}
\|\bbd\bh_{\I_0 \cup
\I_1}\|_2^2=-\sum_{i=2}^{\ell-1}\inner{\bbd(\bh_{\I_0}+\bh_{\I_1})}{\bbd\bh_{\I_i}}.
 \end{equation}
From the parallelogram identity and the block-RIP it can be shown
that
\begin{equation}
\label{eq:p3} |\inner{\bbd \bc_1}{\bbd\bc_2}| \leq
\delta_{2k}\|\bc_1\|_2\|\bc_2\|_2,
 \end{equation}
for any
 two block $k$-sparse vectors with disjoint support. The proof is
similar to \cite[Lemma 2.1]{C08} for the standard RIP.
Therefore,
 \begin{equation}
\left|\inner{\bbd\bh_{\I_0}}{\bbd\bh_{\I_i}}\right| \leq
\delta_{2k} \|\bh_{\I_0}\|_2\|\bh_{\I_i}\|_2,
 \end{equation}
 and similarly for $\inner{\bbd\bh_{\I_1}}{\bbd\bh_{\I_i}}$.
Substituting into (\ref{eq:isum}),
 \begin{eqnarray}
 \label{eq:isum2}
\|\bbd\bh_{\I_0 \cup \I_1}\|_2^2 & = &
\left|\sum_{i=2}^{\ell-1}\inner{\bbd(\bh_{\I_0}+\bh_{\I_1})}{\bbd\bh_{\I_i}}
\right| \nonumber \\
&\leq & \sum_{i=2}^{\ell-1}
\left(|\inner{\bbd\bh_{\I_0}}{\bbd\bh_{\I_i}}|+|\inner{\bbd\bh_{\I_1}}{\bbd\bh_{\I_i}}|\right)
\nonumber \\
& \leq & \delta_{2k}
(\|\bh_{\I_0}\|_2+\|\bh_{\I_1}\|_2)\sum_{i=2}^{\ell-1}
\|\bh_{\I_i}\|_2.
 \end{eqnarray}
From the Cauchy-Schwarz inequality, any length-2 vector $\ba$
satisfies $\ba(1)+\ba(2) \leq \sqrt{2}\|\ba\|_2$. Therefore,
 \begin{equation}
\|\bh_{\I_0}\|_2+\|\bh_{\I_1}\|_2 \leq\sqrt{2}
\sqrt{\|\bh_{\I_0}\|_2^2+\|\bh_{\I_1}\|_2^2}= \sqrt{2} \|\bh_{\I_0
\cup \I_1}\|_2,
 \end{equation}
 where the last equality is a result of the fact that $\bh_{\I_0}$
 and $\bh_{\I_1}$ have disjoint support. Substituting into
 (\ref{eq:isum2}) and using (\ref{eq:l2n}), (\ref{eq:sum}) and (\ref{eq:hineq}),
\begin{eqnarray}
\label{eq:ineqtmp} \|\bbd \bh_{\I_0 \cup \I_1}\|_2^2 &
\stackrel{(\ref{eq:l2n}), (\ref{eq:sum})}{\leq} &
\sqrt{2}k^{-1/2}\delta_{2k} \|\bh_{\I_0 \cup \I_1}\|_2
\|\bh_{\I_0^c}\|_{2,\I}
\nonumber \\
& \stackrel{(\ref{eq:hineq})}{\leq} & \sqrt{2}
\delta_{2k}\|\bh_{\I_0 \cup \I_1}\|_2 \|\bh_{\I_0}\|_2
\nonumber \\
 & \leq & \sqrt{2} \delta_{2k}\|\bh_{\I_0 \cup
\I_1}\|_2^2,
 \end{eqnarray}
where  the last inequality follows from $\|\bh_{\I_0}\|_2 \leq
\|\bh_{\I_0 \cup
 \I_1}\|_2$.
 Combining (\ref{eq:ineqtmp}) with the RIP (\ref{eq:rip}) we have
 \begin{equation}
 \label{eq:ripc}
(1-\delta_{2k})\|\bh_{\I_0 \cup \I_1}\|_2^2\leq  \|\bbd \bh_{\I_0
\cup \I_1}\|_2^2 \leq \sqrt{2}\delta_{2k} \|\bh_{\I_0 \cup
\I_1}\|_2^2.
 \end{equation}
Since $\delta_{2k}<\sqrt{2}-1$, (\ref{eq:ripc}) can hold only if
$\|\bh_{\I_0 \cup \I_1}\|_2=0$, which completes the proof.
\end{proof}

We conclude this subsection by pointing out more explicitly the
differences between the proof of Theorem~\ref{thm:rip} and that of
\cite{C08}. The main difference begins in (\ref{eq:l2n}); in our
formulation each of the subvectors $\bh_{\I_i}$ may have a
different number of non-zero elements, while the equivalent
equation in
 \cite{C08} (Eq. (10)) relies on the fact that the maximal number of non-zero elements in each of the subvectors
 is the same. This requires the use of several mixed-norms in our
setting. The rest of the proof follows the spirit of \cite{C08}
 where in some of the inequalities conventional norms are used, while in others
 the adaptation to our setting necessitates mixed norms.

\subsection{Robust Recovery}

We now treat the situation in which the observations are noisy,
and the vector $\bc_0$ is not exactly block-$k$ sparse.

Specifically, suppose that the measurements (\ref{eq:samplesd})
are corrupted by bounded noise so that
\begin{equation}
\by=\bbd\bc+\bz,
\end{equation}
where $\|\bz\|_2 \leq \epsilon$. In order to recover $\bc$ we use
the modified SOCP:
\begin{eqnarray}
\label{eq:l1n}
\min_{\bc} && \|\bc\|_{2,\I} \nonumber \\
\st && \|\by-\bbd\bc\|_2 \leq \epsilon.
\end{eqnarray}
In addition, given a $\bc \in \RR^N$, we denote by $\bc^k$ the
best approximation of $\bc$ by a vector with $k$ non-zero blocks,
so that $\bc^k$ minimizes
 $\|\bc-\bd\|_\mixed$ over all
block $k$-sparse vectors $\bd$.
 Theorem~\ref{thm:noise} shows that even when $\bc$ is not block $k$-sparse and
the measurements are noisy, the best block-$k$ approximation can
be well approximated using (\ref{eq:l1n}).
\begin{theorem}
\label{thm:noise} Let $\by=\bbd\bc_0+\bz$ be noisy measurements of
a vector $\bc_0$. Let $\bc^k$ denote the best block $k$-sparse
approximation of $\bc_0$, such that $\bc^k$ is block $k$-sparse
and minimizes $\|\bc_0-\bd\|_\mixed$ over all block $k$-sparse
vectors $\bd$, and let $\bc'$ be a solution to (\ref{eq:l1n}).
 If $\bbd$ satisfies the block RIP (\ref{eq:rip}) with
$\delta_{2k}<\sqrt{2}-1$ then
\begin{equation}
\label{eq:thmb} \|\bc_0-\bc'\|_2 \leq
\frac{2(1-\delta_{2k})}{1-(1+\sqrt{2})\delta_{2k}} k^{-1/2}
\|\bc_0-\bc^k\|_\mixed+\frac{4\sqrt{1+\delta_{2k}}}{1-(1+\sqrt{2})\delta_{2k}}\epsilon.
\end{equation}
\end{theorem}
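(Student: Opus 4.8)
The plan is to follow the structure of the noise-free proof of Theorem~\ref{thm:rip}, adapting the argument of \cite{C08} for noisy, non-block-sparse recovery to the mixed $\ell_2/\ell_1$ setting. Write $\bc'=\bc_0+\bh$, let $\I_0$ be the indices of the $k$ largest-norm blocks of $\bc_0$ (so $\bc^k=(\bc_0)_{\I_0}$ up to the obvious identification), and decompose $\bh_{\I_0^c}=\sum_{i\ge1}\bh_{\I_i}$ into groups of $k$ blocks ordered by decreasing block-norm, exactly as before. As in Part~I of Theorem~\ref{thm:rip}, I would bound $\|\bh_{(\I_0\cup\I_1)^c}\|_2\le k^{-1/2}\sum_{i\ge1}\|\bh_{\I_i}\|_{2,\I}=k^{-1/2}\|\bh_{\I_0^c}\|_{2,\I}$ via inequality~(\ref{eq:l2n}). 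The feasibility step changes: since $\bc_0$ is no longer block-$k$-sparse and $\bc'$ merely minimizes $\|\cdot\|_{2,\I}$ subject to $\|\by-\bbd\bc\|_2\le\epsilon$, the inequality $\|\bc_0\|_{2,\I}\ge\|\bc'\|_{2,\I}$ combined with the triangle inequality on blocks in $\I_0$ and in $\I_0^c$ gives, after rearranging, the ``cone-plus-tail'' bound
\begin{equation}
\|\bh_{\I_0^c}\|_{2,\I}\le\|\bh_{\I_0}\|_{2,\I}+2\|(\bc_0)_{\I_0^c}\|_{2,\I}
=\|\bh_{\I_0}\|_{2,\I}+2\|\bc_0-\bc^k\|_{2,\I},
\end{equation}
replacing (\ref{eq:hineq}). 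Feeding this back, one obtains $\|\bh_{(\I_0\cup\I_1)^c}\|_2\le\|\bh_{\I_0}\|_2+2k^{-1/2}\|\bc_0-\bc^k\|_{2,\I}$, the analogue of (\ref{eq:h2c}) with an additive error term.

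For Part~II I would invoke the RIP as before, but now $\bbd\bh$ is not zero: since both $\bc_0$ and $\bc'$ are feasible, $\|\bbd\bh\|_2\le\|\by-\bbd\bc_0\|_2+\|\by-\bbd\bc'\|_2\le2\epsilon$. Then, starting from
\begin{equation}
\|\bbd\bh_{\I_0\cup\I_1}\|_2^2=\inner{\bbd\bh_{\I_0\cup\I_1}}{\bbd\bh}-\sum_{i\ge2}\inner{\bbd\bh_{\I_0\cup\I_1}}{\bbd\bh_{\I_i}},
\end{equation}
bound the first term by Cauchy--Schwarz and the block-RIP upper bound as $\|\bbd\bh_{\I_0\cup\I_1}\|_2\cdot2\epsilon\le2\epsilon\sqrt{1+\delta_{2k}}\,\|\bh_{\I_0\cup\I_1}\|_2$, and bound the sum exactly as in (\ref{eq:isum2})--(\ref{eq:ineqtmp}) using (\ref{eq:p3}), the $\sqrt2$ trick on the length-2 vector $(\|\bh_{\I_0}\|_2,\|\bh_{\I_1}\|_2)$, and the Part~I tail bound, yielding $\sqrt2\,\delta_{2k}\|\bh_{\I_0\cup\I_1}\|_2\bigl(\|\bh_{\I_0}\|_2+2k^{-1/2}\|\bc_0-\bc^k\|_{2,\I}\bigr)$. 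Combining with the block-RIP lower bound $(1-\delta_{2k})\|\bh_{\I_0\cup\I_1}\|_2^2\le\|\bbd\bh_{\I_0\cup\I_1}\|_2^2$ and dividing through by $\|\bh_{\I_0\cup\I_1}\|_2$ gives a linear inequality in $\|\bh_{\I_0\cup\I_1}\|_2$:
\begin{equation}
\bigl(1-(1+\sqrt2)\delta_{2k}\bigr)\|\bh_{\I_0\cup\I_1}\|_2\le2\sqrt2\,k^{-1/2}\delta_{2k}\|\bc_0-\bc^k\|_{2,\I}+2\sqrt{1+\delta_{2k}}\,\epsilon.
\end{equation}
Since $\delta_{2k}<\sqrt2-1$ the coefficient on the left is positive, so this bounds $\|\bh_{\I_0\cup\I_1}\|_2$.

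Finally I would assemble the pieces: $\|\bh\|_2\le\|\bh_{\I_0\cup\I_1}\|_2+\|\bh_{(\I_0\cup\I_1)^c}\|_2\le2\|\bh_{\I_0\cup\I_1}\|_2+2k^{-1/2}\|\bc_0-\bc^k\|_{2,\I}$ using the Part~I bound and $\|\bh_{\I_0}\|_2\le\|\bh_{\I_0\cup\I_1}\|_2$. Substituting the linear bound on $\|\bh_{\I_0\cup\I_1}\|_2$ and collecting the $\|\bc_0-\bc^k\|_{2,\I}$ and $\epsilon$ terms, the constants should combine to exactly $\frac{2(1-\delta_{2k})}{1-(1+\sqrt2)\delta_{2k}}$ and $\frac{4\sqrt{1+\delta_{2k}}}{1-(1+\sqrt2)\delta_{2k}}$ as claimed. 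The main obstacle I anticipate is purely bookkeeping: tracking the mixed-norm versus $\ell_2$-norm distinctions carefully through the chain (the constant $2(1-\delta_{2k})$ in the numerator in particular requires that the $2k^{-1/2}\|\bc_0-\bc^k\|_{2,\I}$ tail term from $\|\bh\|_2$ and the one coming through $\|\bh_{\I_0\cup\I_1}\|_2$ combine with the right weights after multiplying by $2$), and making sure the definition of $\bc^k$ via $\|\cdot\|_{2,\I}$ is the one that makes $\|(\bc_0)_{\I_0^c}\|_{2,\I}=\|\bc_0-\bc^k\|_{2,\I}$ hold — which it does, since the best block-$k$ approximation in mixed norm keeps the $k$ blocks of largest $\ell_2$ norm.
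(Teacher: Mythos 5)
Your proposal is correct and follows essentially the same route as the paper's own proof: the same decomposition of $\bh$, the same cone-plus-tail inequality replacing (\ref{eq:hineq}), the same use of feasibility to get $\|\bbd\bh\|_2\leq 2\epsilon$, the same splitting of $\|\bbd\bh_{\I_0\cup\I_1}\|_2^2$ as in (\ref{eq:isum3}), and the same final assembly via $\|\bh\|_2\leq 2\|\bh_{\I_0\cup\I_1}\|_2+2e_0$. One caveat you share with the paper (which also does not carry out the last arithmetic step explicitly): collecting the constants actually yields $\frac{2\bigl(1+(\sqrt{2}-1)\delta_{2k}\bigr)}{1-(1+\sqrt{2})\delta_{2k}}$ as the coefficient of $k^{-1/2}\|\bc_0-\bc^k\|_\mixed$, not the slightly smaller $\frac{2(1-\delta_{2k})}{1-(1+\sqrt{2})\delta_{2k}}$ stated in (\ref{eq:thmb}), so the claim that the constants ``combine to exactly'' the stated ones should be checked rather than asserted.
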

Before proving the theorem, note that the first term in
(\ref{eq:thmb}) is a result of the fact that $\bc_0$ is not
exactly $k$-block sparse.  The second expression quantifies the
recovery error due to the noise.

\begin{proof}
The proof is very similar to that of Theorem~\ref{thm:rip} with a
few differences which we indicate. These changes follow the proof
of \cite[Theorem 1.3]{C08}, with appropriate modifications to
address the mixed norm.

 Denote by
$\bc'=\bc_0+\bh$ the solution to (\ref{eq:l1n}). Due to the noise
and the fact that $\bc_0$ is not block $k$-sparse, we will no
longer obtain $\bh=\bo$. However, we will show that $\|\bh\|_2$ is
bounded. To this end, we begin as in the proof of
Theorem~\ref{thm:rip} by using (\ref{eq:normh}). In the first part
of the proof we show that $\|\bh_{(\I_0 \cup \I_1)^c}\|_2 \leq
\|\bh_{\I_0 \cup \I_1}\|_2+2 e_0$ where
$e_0=k^{-1/2}\|\bc_0-\bc_{\I_0}\|_\mixed$ and $\bc_{\I_0}$ is the
restriction of $\bc_0$ onto the $k$ blocks corresponding to the
largest $\ell_2$ norm. Note that $\bc_{\I_0}=\bc^k$. In the second
part, we develop a bound on $\|\bh_{\I_0 \cup \I_1}\|_2$.

{\it Part I: Bound on $\|\bh_{(\I_0 \cup \I_1)^c}\|_2$}

We begin by decomposing $\bh$ as in the proof of
Theorem~\ref{thm:rip}. The inequalities until (\ref{eq:c2i}) hold
here as well. Instead of (\ref{eq:c2i}) we have
\begin{equation}
\label{eq:c2in} \|\bc_0\|_\mixed \geq
\|\bc_{\I_0}+\bh_{\I_0}\|_\mixed+\|\bc_{\I_0^c}+\bh_{\I_0^c}\|_\mixed
\geq
\|\bc_{\I_0}\|_\mixed-\|\bh_{\I_0}\|_\mixed+\|\bh_{\I_0^c}\|_\mixed-\|\bc_{\I_0^c}\|_\mixed.
\end{equation}
Therefore,
\begin{equation}
\label{eq:hineq2} \|\bh_{\I_0^c}\|_\mixed \leq
2\|\bc_{\I_0^c}\|_\mixed+\|\bh_{\I_0}\|_\mixed,
\end{equation}
where we used the fact that $\|\bc_0\|_\mixed-
\|\bc_{\I_0}\|_\mixed=\|\bc_{\I_0^c}\|_\mixed$. Combining
(\ref{eq:sum}), (\ref{eq:cs}) and (\ref{eq:hineq2})  we have
\begin{equation}
\label{eq:h2ct} \|\bh_{(\I_0 \cup \I_1)^c}\|_2 \leq
\|\bh_{\I_0}\|_2+2e_0 \leq  \|\bh_{\I_0 \cup \I_1}\|_2+2e_0,
\end{equation}
where $e_0=k^{-1/2}\|\bc_0-\bc_{\I_0}\|_\mixed$.

{\it Part II: Bound on $\|\bh_{\I_0 \cup \I_1}\|_2$}

 Using
 the fact that $\bh=\bh_{\I_0 \cup \I_1}+\sum_{i \geq 2} \bh_{\I_i} $ we have
 \begin{equation}
 \label{eq:isum3}
\|\bbd\bh_{\I_0 \cup \I_1}\|_2^2=\inner{\bbd\bh_{\I_0 \cup
\I_1}}{\bbd\bh}-\sum_{i=2}^{\ell-1}\inner{\bbd(\bh_{\I_0}+\bh_{\I_1})}{\bbd\bh_{\I_i}}.
 \end{equation}
From (\ref{eq:rip}),
\begin{equation}
 \label{eq:isum4}
|\inner{\bbd\bh_{\I_0 \cup \I_1}}{\bbd\bh}| \leq \|\bbd\bh_{\I_0
\cup \I_1}\|_2 \|\bbd\bh\|_2 \leq \sqrt{1+\delta_{2k}} \|\bh_{\I_0
\cup \I_1}\|_2 \|\bbd\bh\|_2.
 \end{equation}
Since both $\bc'$ and $\bc_0$ are feasible
\begin{equation}
\label{eq:err} \|\bbd\bh\|_2=\|\bbd(\bc_0-\bc')\|_2 \leq
\|\bbd\bc_0-\by\|_2+\|\bbd\bc'-\by\|_2 \leq 2\epsilon,
\end{equation}
and (\ref{eq:isum4}) becomes
\begin{equation}
 \label{eq:isum5}
|\inner{\bbd\bh_{\I_0 \cup \I_1}}{\bbd\bh}|\leq
2\epsilon\sqrt{1+\delta_{2k}} \|\bh_{\I_0 \cup \I_1}\|_2.
 \end{equation}
Substituting into (\ref{eq:isum3}),
\begin{eqnarray}\label{eq:cmb}
\|\bbd \bh_{\I_0 \cup \I_1}\|_2^2 & \leq &
\left|\inner{\bbd\bh_{\I_0 \cup
\I_1}}{\bbd\bh}\right|+\sum_{i=2}^{\ell-1}\left|\inner{\bbd(\bh_{\I_0}+\bh_{\I_1})}{\bbd\bh_{\I_i}}\right|
\nonumber \\
& \leq  & 2\epsilon \sqrt{1+\delta_{2k}}\|\bh_{\I_0 \cup
\I_1}\|_2+\sum_{i=2}^{\ell-1}\left|\inner{\bbd(\bh_{\I_0}+\bh_{\I_1})}{\bbd\bh_{\I_i}}\right|.
 \end{eqnarray}
Combining with (\ref{eq:isum2}) and (\ref{eq:ineqtmp}),
\begin{equation}
\|\bbd \bh_{\I_0 \cup \I_1}\|_2^2 \leq \bl 2\epsilon
\sqrt{1+\delta_{2k}}+\sqrt{2}\delta_{2k}
k^{-1/2}\|\bh_{\I_0^c}\|_\mixed \br \|\bh_{\I_0 \cup \I_1}\|_2.
 \end{equation}
Using (\ref{eq:cs}) and (\ref{eq:hineq2}) we have the upper bound
\begin{equation}
\label{eq:ub} \|\bbd \bh_{\I_0 \cup \I_1}\|_2^2 \leq \bl 2\epsilon
\sqrt{1+\delta_{2k}}+\sqrt{2}\delta_{2k}(\|\bh_{\I_0}\|+2e_0) \br
\|\bh_{\I_0 \cup \I_1}\|_2.
 \end{equation}

On the other hand, the RIP results in the lower bound
 \begin{equation}
\label{eq:lb2} \|\bbd \bh_{\I_0 \cup \I_1}\|_2^2 \geq
(1-\delta_{2k})\|\bh_{\I_0 \cup \I_1}\|_2^2.
 \end{equation}
From (\ref{eq:ub}) and (\ref{eq:lb2}),
 \begin{equation}
\label{eq:ineq} (1-\delta_{2k})\|\bh_{\I_0 \cup \I_1}\|_2 \leq
2\epsilon \sqrt{1+\delta_{2k}}+\sqrt{2}\delta_{2k} (\|\bh_{\I_0
\cup \I_1}\|+2e_0),
 \end{equation}
 or
 \begin{equation}
\label{eq:bound} \|\bh_{\I_0 \cup \I_1}\|_2 \leq
\frac{2\sqrt{1+\delta_{2k}}}{1-(1+\sqrt{2})\delta_{2k}}\epsilon
+\frac{2\sqrt{2}\delta_{2k}}{1-(1+\sqrt{2})\delta_{2k}}e_0.
 \end{equation}
The condition $\delta_{2k}<\sqrt{2}-1$ ensures that the
denominator in (\ref{eq:bound}) is positive. Substituting
(\ref{eq:bound}) results in
\begin{equation}
\label{eq:normhf} \|\bh\|_2  \leq \|\bh_{\I_0 \cup
\I_1}\|_2+\|\bh_{(\I_0 \cup \I_1)^c}\|_2 \leq 2\|\bh_{\I_0 \cup
\I_1}\|_2+2e_0,
\end{equation}
which completes the proof of the theorem.
\end{proof}

To summarize this section we have seen that as long as $\bbd$
satisfies the block-RIP (\ref{eq:rip}) with a suitable constant,
any block-$k$ sparse vector can be perfectly recovered from its
samples $\by=\bbd\bc$ using the convex SOCP (\ref{eq:l1}). This
algorithm is stable in the sense that by slightly modifying it as
in (\ref{eq:l1n}) it can tolerate noise in a way that ensures that
the norm of the recovery error is bounded by the noise level.
Furthermore, if $\bc$ is not block $k$-sparse, then its best
block-sparse approximation can be approached by solving the SOCP.
These results are summarized in Table~\ref{table:results}. In the
table, $\delta_{2k}$ refers to the block RIP constant.
\begin{table}[h] \caption{Comparison of algorithms for signal
recovery from $\by=\bbd\bc_0+\bz$} \begin{center}
\begin{tabular}{|c|c|c|}
\hline &  Algorithm (\ref{eq:l1})  &  Algorithm
(\ref{eq:l1n}) \\ \hline $\bc_0$ & block $k$-sparse & arbitrary
\\ \hline Noise $\bz$ & none ($\bz=\bo$) & bounded $\|\bz\|_2 \leq
\epsilon$
 \\ \hline
 Condition on $\bbd$ & $\delta_{2k} \leq \sqrt{2}-1$ & $\delta_{2k} \leq
 \sqrt{2}-1$\\ \hline
 Recovery $\bc'$ &  $\bc'=\bc_0$ & $\|\bc_0-\bc'\|_2$ small;
 see (\ref{eq:thmb}) \\
\hline \end{tabular} \end{center} \label{table:results}
\end{table}

\subsection{Advantage of Block Sparsity}

The standard sparsity model considered in CS assumes that $\bx$
has at most $k$ non-zero elements, however it does not impose any
further structure. In particular, the non-zero components can
appear anywhere in the vector. There are many practical scenarios
in which the non-zero values are aligned to blocks, meaning they
appear in regions, and are not arbitrarily spread throughout the
vector. One example in the structured union of subspaces model we
treat in this paper. Other examples are considered in
\cite{PVMH08}.

Prior work on recovery of block-sparse vectors \cite{SPH08}
assumed consecutive blocks of the same size. It was sown that in
this case, when $n,N$ go to infinity, the algorithm (\ref{eq:l1})
will recover the true block-sparse vector with overwhelming
probability. Their analysis is based on characterization of the
null space of $\bbd$. In contrast, our approach relies on RIP
which allows the derivation of uniqueness and equivalence
conditions for finite dimensions and not only in the asymptotic
regime. In addition, Theorem~\ref{thm:noise} considers the case of
mismodelling and noisy observations while in \cite{SPH08} only the
ideal noise-free setting is treated.

To demonstrate the advantage of our algorithm over standard basis
pursuit (\ref{eq:l1cs}), consider the matrix $\bbD$ of
(\ref{matD}). In Section~\ref{sec:l1}, the standard and block RIP
constants of $\bbD$ were calculated and it was shown that block
RIP constants are smaller. This suggests that there are input
vectors $\bx$ for which the mixed $\ell_2/\ell_1$ method of
(\ref{eq:l1}) will be able to recover them exactly from
measurements $\by=\bbd\bc$ while standard $\ell_1$ minimization
will fail. To illustrate this behavior, let
$\bx=[0,0,1,-1,-1,0.1]^T$ be a $4$-sparse vector, in which the
non-zero elements are known to appear in blocks of length $2$. The
prior knowledge that $\bx$ is 4-sparse is not sufficient to
determine $\bx$ from $\by$. In contrast, there is a unique
block-sparse vector consistent with $\by$. Furthermore, our
algorithm which is a relaxed version of (\ref{eq:l0}), finds the
correct $\bx$ while standard $\ell_1$ minimization fails in this case; its output is $\hat{\bx}=[-0.0289,0,0.9134,-1.0289,-1.0289,0]$.

We further compare the recovery performance of $\ell_1$
minimization (\ref{eq:l1cs}) and our algorithm (\ref{mainalg}) for
an extensive set of random signals. In the experiment, we draw a
matrix $\bbD$ of size $25 \times 50$ from the Gaussian ensemble.
The input vector $\bx$ is also randomly generated as a
block-sparse vector with blocks of length $5$. We draw $1\leq k
\leq 25$ non-zero entries from a zero-mean unit variance normal
distribution and divide them into blocks which are chosen
uniformly at random within $\bx$. Each of the algorithms is
executed based on the measurements $\by=\bbd\bx$. In
Fig.~\ref{FigCompareBP_l2l1} we plot the fraction of successful
reconstructions for each $k$ over $500$ experiments. The results
illustrate the advantage of incorporating the block-sparsity
structure into the optimization program. An interesting feature of
the graph is that when using the block-sparse recovery approach,
the performance is roughly constant over the block-length ($5$ in
this example). This explains the performance advantage over
standard sparse recovery.
\begin{figure}
\centering
\includegraphics[scale=1]{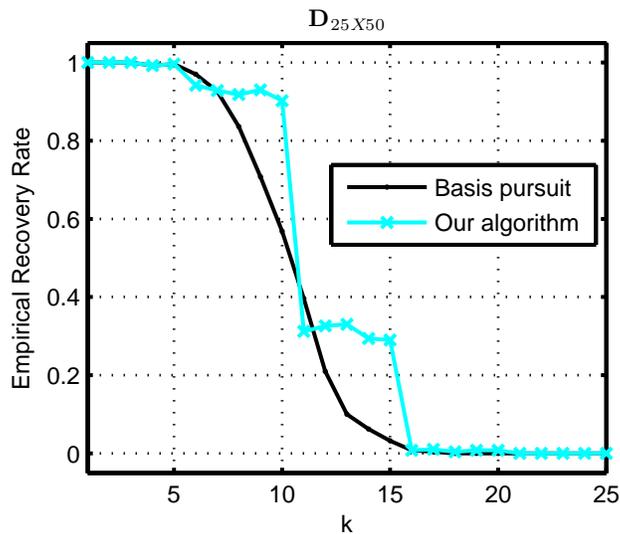}
\caption{Recovery rate of block-sparse signals using standard
$\ell_1$ minimization (basis pursuit) and the mixed
$\ell_2/\ell_1$ algorithm.}\label{FigCompareBP_l2l1}
\end{figure}

\section{Application to MMV Models}\label{sec:mmv}

We now specialize our algorithm and equivalence results to the MMV
problem. This leads to two contributions which we discuss in this
section: The first is an equivalence result based on RIP for a
mixed-norm MMV algorithm. The second is a new measurement strategy
in MMV problems that leads to improved performance over
conventional MMV methods, both in simulations and as measured by
the RIP-based equivalence condition. In contrast to previous
equivalence results, for this strategy we show that even if we
choose the worst possible $\bbx$, improved performance over the
single measurement setting can be guaranteed.

\subsection{Equivalence Results}

As we have seen in Section~\ref{sec:union}, a special case of
block sparsity is the MMV model, in which we are given a
 matrix of measurements $\bby=\bbm \bbx$ where $\bbx$ is an unknown $L
\times d$ matrix that has at most $k$ non-zero rows. Denoting by
$\bc=\mvec(\bbx^T), \by=\mvec(\bby^T)$, $\bbd=\bbm^T \otimes
\bbi_d$ we can express the vector of measurements $\by$ as
$\by=\bbd\bc$ where $\bc$ is a block sparse vector with
consecutive blocks of length $d$. Therefore, the results of
Theorems~\ref{thm:rip} and \ref{thm:noise} can be specified to
this problem.

Recovery algorithms for MMV using convex optimization programs
were studied in \cite{Chen,TroppII} and  several greedy algorithms
were proposed in \cite{Cotter,TroppI}. Specifically, in
\cite{Chen,Cotter,TroppI,TroppII} the authors study a class of
optimization programs, which we refer to as M-BP:
\begin{equation}\label{mmvopt}
\textrm{M-BP($\ell_q$):}\qquad\min\sum_{i=1}^L \|\bbx^i\|_q^p\quad
\st \bby=\bbm \bbx,
\end{equation}
where $\bbx^i$ is the $i$th row of $\bbx$. The choice
$p=1,q=\infty$ was considered in \cite{TroppII}, while \cite{Chen}
treated the case of $p=1$ and arbitrary $q$. Using $p\leq 1$ and
$q=2$ was suggested in \cite{Cotter},\cite{Malioutov}, leading to
the iterative algorithm M-FOCUSS. For $p=1,q=2$, the program
(\ref{mmvopt}) has a global minimum which M-FOCUSS is proven to
find. A nice comparison between these methods can be found in
\cite{TroppII}. Equivalence for MMV algorithms based on RIP
analysis does not appear in previous papers. The most detailed
theoretical analysis can be found in \cite{Chen} which establishes
equivalence results based on mutual coherence.  The results imply
equivalence for (\ref{mmvopt}) with $p=1$ under conditions equal
to those obtained for the single measurement case. Note that RIP
analysis typically leads to tighter equivalence bounds than mutual
coherence analysis.

In our recent work \cite{ME08}, we suggested an alternative
approach to solving MMV problems by merging the $d$ measurement
columns with random coefficients and in such a way transforming
the multiple measurement problem into a single measurement
counterpart. As proved in \cite{ME08}, this technique preserves
the non-zero location set with probability one thus reducing
computational complexity. Moreover, we showed that this method can
be used to boost the empirical recovery rate by repeating the
random merging several times.

Using the block-sparsity approach we can alternatively cast any
MMV model as a single measurement vector problem by
deterministically transforming the multiple measurement vectors
into the single vector model $\mvec(\bby^T)=(\bbm \otimes
\bbi_d)\mvec(\bbx^T)$, where $\bc= \mvec(\bbx^T)$ is block-$k$
sparse with consecutive blocks of length $d$. In contrast to
\cite{ME08} this does not reduce the number of unknowns so that
the computational complexity of the resulting algorithm is on the
same order as previous approaches, and also does not offer the
opportunity for boosting. However, as we see in the next
subsection, with an appropriate choice of measurement matrix this
approach results in improved recovery capabilities.

Since we can cast the MMV problem as one of block-sparse recovery,
we may apply our equivalence results of Theorem~\ref{thm:rip} to
this setting leading to RIP-based equivalence. To this end we
first note that applying the SOCP (\ref{eq:l1}) to the effective
measurement vector $\by$ is the same as  solving (\ref{mmvopt})
with $p=1,q=2$. Thus the equivalence conditions we develop below
relate to this program. Next, if $\bz=\bbd\bc$ where $\bc$ is a
block $2k$-sparse vector and $\bbd=\bbm \otimes \bbi_d$, then
taking the structure of $\bbd$ into account, $\bbz=\bbm \bbx$
where $\bbx$ is a size $L \times d$ matrix whose $i$th row is
equal to $\bc[i]$, and similarly for $\bbz$. The block sparsity of
$\bc$ implies that $\bbx$ has at most $2k$ non-zero rows. The
squared $\ell_2$ norm $\|\bz\|_2^2$ is equal to the squared
$\ell_2$ norm of the rows of $\bbz$ which can be written as
\begin{equation}
\|\bz\|_2^2=\|\bbz\|_F^2=\tr(\bbz^T\bbz),
\end{equation}
where $\|\bbz\|_F$ denotes the Frobenius norm. Since
$\|\bc\|_2^2=\|\bbx\|_F^2$ the RIP  condition becomes
\begin{equation}
\label{eq:mmvrip1} (1-\delta_{2k}) \tr(\bbx^T\bbx) \leq
\tr(\bbx^T\bbm^T\bbm\bbx) \leq (1+\delta_{2k}) \tr(\bbx^T\bbx),
\end{equation}
for any $L \times d$ matrix $\bbx$  with at most $2k$ non-zero
rows.

We now show that (\ref{eq:mmvrip1}) is equivalent to the standard
RIP condition
\begin{equation}
\label{eq:mmvrip2} (1-\delta_{2k}) \|\bx\|_2^2 \leq
\|\bbm\bx\|_2^2 \leq (1+\delta_{2k}) \|\bx\|_2^2,
\end{equation}
for any length $L$ vector $\bx$ that is $2k$-sparse. To see this,
suppose first that (\ref{eq:mmvrip1}) is satisfied for every
matrix $\bbx$ with at most $2k$ non-zero rows and let $\bx$ be an
arbitrary $2k$-sparse vector. If we define $\bbx$ to be the matrix
whose columns are all equal to $\bx$, then $\bbx$ will have at
most $2k$ non-zero rows and therefore satisfies
(\ref{eq:mmvrip1}). Since the columns of $\bbx$ are all equal,
$\tr(\bbx^T\bbx)=d\|\bx\|_2^2$ and
$\tr(\bbx^T\bbm^T\bbm\bbx)=d\|\bbm\bx\|_2^2$ so that
(\ref{eq:mmvrip2}) holds. Conversely, suppose that
(\ref{eq:mmvrip2}) is satisfied for all $2k$-sparse vectors $\bx$
and let $\bbx$ be an arbitrary matrix with at most $2k$ non-zero
rows. Denoting by $\bx_j$ the columns of $\bbx$, each $\bx_j$ is
$2k$-sparse and therefore satisfies (\ref{eq:mmvrip2}). Summing
over all values $j$ results in (\ref{eq:mmvrip1}).

To summarize, if $\bbm$ satisfies the conventional RIP condition
(\ref{eq:mmvrip2}), then the algorithm (\ref{mmvopt}) with
$p=1,q=2$ will recover the true unknown $\bbx$. This requirement
reduces to that we would obtain if we tried to recover each column
of $\bbx$ separately, using the standard $\ell_1$ approach
(\ref{eq:l1cs}). As we already noted, previous equivalence results
for MMV algorithms also share this feature. Although this
condition guarantees that processing the vectors jointly does not
harm the recovery ability, in practice exploiting the joint
sparsity pattern of $\bbx$ via (\ref{mmvopt}) leads to improved
results. Unfortunately, this behavior is not captured by any of
the known equivalence conditions. This is due to the special
structure of $\bbd=\bbm \otimes\bbi$. Since each measurement
vector $\by_i$ is affected only by the corresponding vector
$\bx_i$, it is clear that in the worst-case we can choose
$\bx_i=\bx$ for some vector $\bx$. In this case, all the $\by_i$s
are equal so that adding measurement vectors will not improve our
recovery ability. Consequently, worst-case analysis based on the
standard measurement model for MMV problems cannot lead to
improved performance over the single measurement case.

\subsection{Improved MMV Recovery}

We have seen that the pessimistic equivalence results for MMV
algorithms is a consequence of the fact that in the worst-case
scenario in which $\bx_i=\bx$, using a separable measurement
strategy will render all observation vectors equal. In this
subsection we introduce an alternative measurement technique for
MMV problems that can lead to improved worst-case behavior, as
measured by RIP, over the single channel case.

One way to improve the analytical results is to consider an
average case analysis instead of a worst-case approach.
 In
\cite{ER09} we show that if the unknown vectors $\bx_i$ are
generated randomly, then the performance improves with increasing
number of measurement vectors. The advantage stems from the fact
that the situation of equal vectors has zero probability and
therefore does not affect the average performance. Here we take a
different route which does not involve randomness in the unknown
vectors, and leads to improved results even in the worst-case
(namely without requiring an average analysis).

To enhance the performance of MMV recovery, we note that when we
allow for an arbitrary (unstructured) $\bbd$, the RIP condition of
Theorem~\ref{thm:rip} is weaker than the standard RIP requirement
for recovering $k$-sparse vectors. This suggests that we can
improve the performance of MMV methods by converting the problem
into a general block sparsity problem, and then sampling with an
arbitrary unstructured matrix $\bbd$ rather than the choice
$\bbd=\bbm^T \otimes \bbi_d$. The tradeoff introduced is increased
computational complexity since each measurement is based on all
input vectors. The theoretical conditions will now be looser,
since block-RIP is weaker than standard RIP. Furthermore, in
practice, this approach often improves the performance over
separable MMV measurement techniques as we illustrate in the
following example.

In the example, we compare the performance of several MMV
algorithms for recovering $\bbx$ in the model $\bby=\bbm\bbx$,
with our method based on block sparsity in which the measurements
$\by$ are obtained via $\by=\bbd\bc$ where $\bc=\mvec(\bbx^T)$ and
$\bbd$ is a dense matrix. Choosing $\bbd$ as a block diagonal
matrix with blocks equal to $\bbm$ results in the standard MMV
measurement model. The effective matrices $\bbd$ have the same
size in the case in which it is block diagonal and when it is
dense. To compare  the performance of  (\ref{mainalg}) with a
dense $\bbd$ to that of (\ref{mmvopt}) with a block diagonal
$\bbd$, we compute the empirical recovery rate of the methods in
the same way performed in \cite{ME08}. The matrices $\bbm$ and
$\bbd$ are drawn randomly from a Gaussian ensemble. In our
example, we choose $\ell=20,L=30,d=5$ where $\ell$ is the number
of rows in $\bby$. The matrix $\bbx$ is generated randomly by
first selecting the $k$ non-zero rows uniformly at random, and
then drawing the elements in these rows from a normal
distribution. The empirical recovery rates using the methods of
(\ref{mmvopt}) for different choices of $q$ and $p$, ReMBO
\cite{ME08} and our algorithm (\ref{eq:l1}) with dense $\bbd$ are
depicted in Fig.~\ref{FigMMV_BlockRIP}. When the index $p$ is
omitted it is equal to $1$. Evidently, our algorithm performs
better than most popular optimization techniques for MMV systems.
\begin{figure}
\centering
\includegraphics[scale=1]{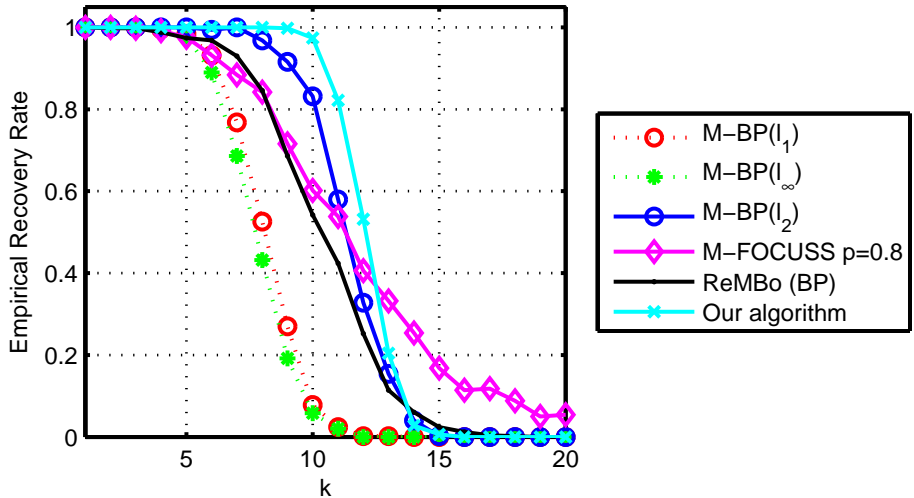}
\caption{Recovery rate for different number $k$ of non-zero rows
in $\bbx$. Each point on the graph represents an average recovery
rate over 500 simulations.}\label{FigMMV_BlockRIP}
\end{figure}
We stress that the performance advantage is due to the joint
measurement process rather than a new recovery algorithm.

\section{Random Matrices}\label{sec:prob}

Theorems~\ref{thm:rip} and \ref{thm:noise} establish that a
sufficiently small block RIP constant $\delta_{2k|\I}$ ensures
exact recovery of the coefficient vector $\bc$. We now prove that
random matrices are likely to satisfy this requirement.
Specifically, we show that the probability that $\delta_{k|\I}$
exceeds a certain threshold decays exponentially in the length of
$\bc$. Our approach relies on results of \cite{CT05},\cite{CT06}
developed for standard RIP, however, exploiting the block
structure of $\bc$ leads to a much faster decay rate.

\begin{proposition}\label{prop:randomD}
Suppose $\bbD$ is an $n\times N$ matrix from the Gaussian
ensemble, namely $[\bbD]_{ik}\sim\mathcal{N}(0,\frac{1}{n})$. Let
$\delta_{k|\I}$ be the smallest value satisfying the block RIP
(\ref{eq:rip}) over $\I=\{d_1=d,\ldots,d_m=d\}$, assuming $N=md$
for some integer $m$. Then, for every $\epsilon>0$ the block RIP
constant $\delta_{k|\I}$ obeys (for $n,N$ large enough, and fixed
$d$)
\begin{equation}\label{eq:randomD}
\prob\left(\sqrt{1+\delta_{k|\I}}>1+(1+\epsilon)f(r)\right)\leq
2\expp{-NH(r)\epsilon}\cdot\expp{-m(d-1)H(r)}.
\end{equation}
Here, the ratio $r=kd/N$ is fixed,
$f(r)=\sqrt{\frac{N}{n}}\left(\sqrt{r}+\sqrt{2H(r)}\right)$, and
$H(q)=-q\log q -(1-q) \log(1-q)$ is the entropy function defined
for $0<q<1$.
\end{proposition}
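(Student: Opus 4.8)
The plan is to bound $\sqrt{1+\delta_{k|\I}}$ by the extreme singular values of the submatrices of $\bbD$ supported on $k$ blocks, and then to marry a union bound over the block patterns with the standard concentration estimate for the singular values of a Gaussian matrix, in the spirit of \cite{CT05,CT06}. For a set $T$ of $k$ block indices let $\bbD_T$ be the $n\times kd$ submatrix of $\bbD$ consisting of the columns lying in those blocks. Reading off (\ref{eq:rip}), the smallest admissible constant is $\delta_{k|\I}=\max\{\max_T\sigma_{\max}^2(\bbD_T)-1,\ 1-\min_T\sigma_{\min}^2(\bbD_T)\}$, the maxima running over the $\binom{m}{k}$ choices of $T$. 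Hence the event $\{\sqrt{1+\delta_{k|\I}}>1+(1+\epsilon)f(r)\}$ is contained in the union over $T$ of the single-pattern deviation events $\{\sigma_{\max}(\bbD_T)>1+(1+\epsilon)f(r)\}$ together with the (symmetric and, as it turns out, no larger) events governing $\sigma_{\min}$; by a union bound it then suffices to bound the probability of one such single-pattern event and multiply by $\binom{m}{k}$, the factor $2$ in (\ref{eq:randomD}) absorbing the $\sigma_{\min}$ side.

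For a fixed $T$, the matrix $\bbD_T$ has i.i.d.\ $\mathcal{N}(0,1/n)$ entries and $kd=rN$ columns, so in the stated regime (where we may assume $n\geq kd$) the classical Gaussian singular-value concentration bound gives $\prob\bigl(\sigma_{\max}(\bbD_T)>1+\sqrt{kd/n}+s\bigr)\leq\expp{-ns^2/2}$ for every $s>0$, and the analogous lower bound for $\sigma_{\min}$. Using $kd=rN$ we rewrite $f(r)=\sqrt{kd/n}+\sqrt{2NH(r)/n}$, so $1+(1+\epsilon)f(r)\geq 1+\sqrt{kd/n}+(1+\epsilon)\sqrt{2NH(r)/n}$ since $1+\epsilon\geq 1$. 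Taking $s=(1+\epsilon)\sqrt{2NH(r)/n}$ yields $ns^2/2=(1+\epsilon)^2NH(r)$, so the single-pattern probability is at most $\expp{-(1+\epsilon)^2NH(r)}$.

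It remains to count patterns and simplify. Because $N=md$ we have $r=kd/N=k/m$, hence $\binom{m}{k}\leq\expp{mH(r)}=\expp{NH(r)}\cdot\expp{-(N-m)H(r)}=\expp{NH(r)}\cdot\expp{-m(d-1)H(r)}$. Multiplying the single-pattern bound by this count and by $2$, $\prob\bigl(\sqrt{1+\delta_{k|\I}}>1+(1+\epsilon)f(r)\bigr)\leq 2\,\expp{NH(r)}\expp{-m(d-1)H(r)}\expp{-(1+\epsilon)^2NH(r)}=2\,\expp{-((1+\epsilon)^2-1)NH(r)}\expp{-m(d-1)H(r)}$, and since $(1+\epsilon)^2-1=2\epsilon+\epsilon^2\geq\epsilon$ this is at most $2\,\expp{-NH(r)\epsilon}\expp{-m(d-1)H(r)}$, which is (\ref{eq:randomD}).

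I expect no deep obstacle; the content is in the combinatorial step, which is also the source of the gain over standard RIP: recovering $kd$-sparse vectors in the ordinary sense would require a union bound over $\binom{N}{kd}\leq\expp{NH(r)}$ supports, whereas block sparsity confines the support to one of only $\binom{m}{k}\leq\expp{mH(r)}$ block-aligned patterns --- smaller by precisely the factor $\expp{-m(d-1)H(r)}$ appearing in (\ref{eq:randomD}). The parts needing care are purely technical: invoking the Gaussian singular-value concentration with the correct normalization (variance $1/n$ entries, and the regime $n\geq kd$ supplied by the ``$n,N$ large, $d$ fixed'' hypothesis), using the slack in the $(1+\epsilon)$ factor to pass from the exponent $(1+\epsilon)^2-1$ to a clean $\epsilon$, and checking that the $\sigma_{\min}$-side deviation indeed obeys the same (in fact a smaller) estimate so that both sides fit under the displayed factor of $2$.
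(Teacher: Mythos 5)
Your proposal is correct and follows essentially the same route as the paper: the same reduction of $\delta_{k|\I}$ to extreme singular values of the block submatrices, the Davidson--Szarek concentration bound, a union bound over the $\binom{m}{k}\leq\expp{mH(r)}$ block patterns, and the identification of the gain factor $\expp{-m(d-1)H(r)}$. The only (immaterial) difference is how the $(1+\epsilon)$ slack is spent --- you obtain an exponent $(1+\epsilon)^2NH(r)$ and relax via $(1+\epsilon)^2-1\geq\epsilon$, whereas the paper absorbs the factor as $\sqrt{(1+\epsilon)\cdot 2NH(r)/n}$ to get $(1+\epsilon)NH(r)$ directly; both yield (\ref{eq:randomD}).
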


The assumption that $d_i=d$ simplifies the calculations in the
proof. Following the proof, we shortly address the more difficult
case in which the blocks have varying lengths. We note that
Proposition~\ref{prop:randomD} reduces to the result of
\cite{CT05} when $d=1$. However, since $f(r)$ is independent of
$d$, it follows that for $d>1$ and fixed problem dimensions
$n,N,r$, block-RIP constants are smaller than the standard RIP
constant. The second exponent in the right-hand side of
(\ref{eq:randomD}) is responsible for this behavior.

\begin{proof}
Let $\lambda=(1+\epsilon)f(r)$ and define
\begin{equation}\label{eq:rndsig}
\bar{\sigma} = \max_{|T|=k,d}\sigmamax (\bbD_T),\quad
\underline{\sigma} = \min_{|T|=k,d}\sigmamin(\bbD_T),
\end{equation}
where $\sigmamax(\bbD_T),\sigmamin(\bbD_T),$ are the largest and
the smallest singular values of $\bbD_T$, respectively. We use
$|T|=k,d$ to denote a column subset of $\bbD$ consisting of $k$
blocks of length $d$. For brevity we omit subscripts and denote
$\delta=\delta_{k|\I}$. The inequalities in the definition of
block-RIP (\ref{eq:rip}) imply that
\begin{eqnarray}
1+\delta \geq \bar{\sigma}^2\\
1-\delta \leq \underline{\sigma}^2.
\end{eqnarray}
Since $\delta$ is the smallest number satisfying these
inequalities we have that $1+\delta =
\max(\bar{\sigma}^2,2-\underline{\sigma}^2)$. Therefore,
\begin{align}\label{eq:proof_random1a}
\prob\left(\sqrt{1+\delta}>1+\lambda\right) & =
\prob\left(\sqrt{\max(\bar{\sigma}^2,2-\underline{\sigma}^2)}>1+\lambda\right)\\
& \leq
\prob(\bar{\sigma}>1+\lambda)+\prob(\sqrt{2-\underline{\sigma}^2}>1+\lambda).
\end{align}
Noting that $\underline{\sigma}\geq 1-\lambda$ implies
$\sqrt{2-\underline{\sigma}^2} \leq 1+\lambda$ we conclude that
\begin{equation}\label{eq:proof_random1}
\prob\left(\sqrt{1+\delta}>1+\lambda\right)\leq
\prob(\bar{\sigma}>1+\lambda)+\prob(\underline{\sigma}<1-\lambda).
\end{equation}

We now bound each term in the right-hand-side of
(\ref{eq:proof_random1}) using a result of Davidson and Szarek
\cite{SZ91} regarding the concentration of the extreme singular
values of a Gaussian matrix. It was proved in \cite{SZ91} that an
$m\times n$ matrix $\bbX$ with $n\geq m$ satisfies
\begin{eqnarray}
\prob(\sigmamax(\bbX)>1+\sqrt{m/n}+t)\leq \expp{-nt^2/2}\\
\prob(\sigmamin(\bbX)<1-\sqrt{m/n}-t)\leq \expp{-nt^2/2}.
\end{eqnarray}
Applying a union bound leads to
\begin{align}\label{eq:unbd}
\prob\left(\bar{\sigma}>1+\sqrt{\frac{kd}{n}}+t\right) & \leq
\sum_{|T|=k,d}\prob\left(\sigmamax(\bbD_T)>1+\sqrt{\frac{kd}{n}}+t\right)\\&\leq
\sum_{|T|=k,d}\expp{-nt^2/2}\\&= {m \choose k}\expp{-nt^2/2}.
\end{align}
Using the well-known bound on the binomial coefficient (for
sufficiently large $m$)
\begin{equation}
{m \choose k}\leq\expp{mH(k/m)},
\end{equation}
we conclude that
\begin{equation}\label{eq:proof_random2}
\prob\left(\bar{\sigma}>1+\sqrt{\frac{kd}{n}}+t\right)\leq
\expp{mH(k/m)}\expp{-nt^2/2}.
\end{equation}

To utilize this result in (\ref{eq:proof_random1}) we rearrange
\begin{align}
1+\lambda & = 1+(1+\epsilon)f(r)\label{eq:proof_random2a}\\
& = 1+(1+\epsilon)\left(\sqrt{\frac{kd}{n}}+\sqrt{\frac{2N}{n}H(r)}\right)\\
& \geq
1+\sqrt{\frac{kd}{n}}+\sqrt{(1+\epsilon)\frac{2N}{n}H(r)}\label{eq:proof_random2b}
\end{align}
and obtain that
\begin{align}
\prob\left(\bar{\sigma}>1+\lambda\right)&\leq
\prob\left(\bar{\sigma} >
1+\sqrt{\frac{kd}{n}}+\sqrt{(1+\epsilon)\frac{2N}{n}H(r)}\right).
\end{align}
Using (\ref{eq:proof_random2}) leads to
\begin{align}
\prob\left(\bar{\sigma}>1+\lambda\right) & \leq \expp{mH(k/m)}\expp{-\frac{n(1+\epsilon) 2NH(r)}{2n}}\\
& = \expp{NH(r)-m(d-1)H(r)-(1+\epsilon) NH(r)}\\
&\leq
\expp{-NH(r)\epsilon}\expp{-m(d-1)H(r)}\label{eq:proof_random3}.
\end{align}
Similar arguments are used to bound the second term in
(\ref{eq:proof_random1}), completing the proof.
\end{proof}

The proof of Proposition~\ref{prop:randomD} can be adapted to the
case in which $d_i$ are not equal. In this case, the notation
$|T|=k,d$ is replaced by $|T|=k|\I$ and has the following meaning:
$T$ indicates a column subset of $\bbD$ consisting of $k$ blocks
from $\I$. Since $\I$ contains variable-length blocks, $|T|$ is
not constant and depends on the particular column subset.
Consequently, in order to apply the union bounds in
(\ref{eq:unbd}) we need to consider the worst-case scenario
corresponding to the maximal block length in $\I$.
Proposition~\ref{prop:randomD} thus holds for $d=\max(d_i)$.
However, it is clear that the resulting probability bound will not
be as stringent as in the case of equal $d_i=d$, especially when
the ratio $\max(d_i)/\min(d_i)$ is large.

Proposition~\ref{prop:randomD} holds as is for matrices $\bbD$
from the Bernoulli ensemble, namely $[\bbD]_{ik}=\pm
\frac{1}{\sqrt{n}}$ with equal probability. In fact, the
proposition is true for any ensemble for which the concentration
of extreme singular values holds.

The following corollary emphasizes the asymptotic behavior of
block-RIP constants per given number of samples.

\begin{corollary}
Consider the setting of Proposition~\ref{prop:randomD}, and define
$g(r)=\sqrt{\frac{N}{n}}\left(\sqrt{r}+\sqrt{2H(r)d^{-1}}\right)$.
Then,
\begin{equation}\label{eq:randomD2}
\prob\left(\sqrt{1+\delta_{k|\I}}>1+(1+\epsilon)g(r)\right)\leq
2\expp{-mH(r)\epsilon}.
\end{equation}
\end{corollary}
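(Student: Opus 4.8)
The plan is to derive the corollary directly from Proposition~\ref{prop:randomD} by a simple algebraic rearrangement of the exponents, without revisiting the Davidson--Szarek concentration bounds. First I would observe that the only difference between the claimed bound (\ref{eq:randomD2}) and the conclusion (\ref{eq:randomD}) of the proposition is the replacement of $f(r)=\sqrt{\frac{N}{n}}\bl\sqrt{r}+\sqrt{2H(r)}\br$ by the smaller quantity $g(r)=\sqrt{\frac{N}{n}}\bl\sqrt{r}+\sqrt{2H(r)d^{-1}}\br$, together with the absorption of the second exponential factor $\expp{-m(d-1)H(r)}$ into the first. So the strategy is to show that, with $g$ in place of $f$, the extra factor $\expp{-m(d-1)H(r)}$ is exactly what is needed to convert $\expp{-NH(r)\epsilon}$ into $\expp{-mH(r)\epsilon}$.

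Concretely, I would retrace the key inequality (\ref{eq:proof_random2b}) from the proof of the proposition: there one writes $1+\lambda = 1+(1+\epsilon)f(r) \geq 1+\sqrt{kd/n}+\sqrt{(1+\epsilon)\tfrac{2N}{n}H(r)}$. Repeating this step with $\lambda = (1+\epsilon)g(r)$ instead, and using $N/d = m$ together with $r = kd/N$ so that $\tfrac{N}{n}\cdot H(r)\cdot d^{-1} = \tfrac{m}{n}H(r)$, I get $1+\lambda \geq 1+\sqrt{kd/n}+\sqrt{(1+\epsilon)\tfrac{2m}{n}H(r)}$. Feeding this into (\ref{eq:proof_random2}) exactly as in the proof of the proposition gives
\begin{equation}
\prob\left(\bar{\sigma}>1+\lambda\right)\leq \expp{mH(k/m)}\expp{-(1+\epsilon)mH(r)}.
\end{equation}
Now I use that $k/m = kd/N = r$, so $H(k/m)=H(r)$, and the right-hand side collapses to $\expp{mH(r)}\expp{-(1+\epsilon)mH(r)} = \expp{-mH(r)\epsilon}$. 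The same argument bounds $\prob(\underline{\sigma}<1-\lambda)$, and combining the two via (\ref{eq:proof_random1}) yields the factor $2$ in (\ref{eq:randomD2}).

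The only genuine subtlety — and the step I would be most careful about — is bookkeeping the relation between $N$, $m$, $d$, $k$, and $r$ so that $H(k/m)=H(r)$ and $\tfrac{N}{nd}=\tfrac{m}{n}$ are used consistently; this is where the ``$d^{-1}$'' in $g(r)$ and the disappearance of the $m(d-1)H(r)$ term both originate, and getting a sign or a factor wrong here would break the whole argument. Everything else is a verbatim reuse of the proof of Proposition~\ref{prop:randomD}: the union bound over the ${m \choose k}$ block-supports, the binomial estimate ${m\choose k}\leq \expp{mH(k/m)}$, and the Davidson--Szarek tail bounds. I would therefore present the corollary's proof as ``repeat the proof of Proposition~\ref{prop:randomD} with $f(r)$ replaced by $g(r)$, noting that $k/m=r$ so that the binomial exponent $mH(k/m)$ is now cancelled by the $(1+\epsilon)mH(r)$ term rather than by $NH(r)$,'' and supply the one displayed inequality above as the key computation.
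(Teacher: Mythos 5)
Your proposal is correct and follows essentially the same route as the paper: the paper's own proof of the corollary consists precisely of replacing the inequality (\ref{eq:proof_random2a})--(\ref{eq:proof_random2b}) with $1+\lambda \geq 1+\sqrt{kd/n}+\sqrt{(1+\epsilon)\tfrac{2N}{nd}H(r)}$ and feeding it into (\ref{eq:proof_random2}), exactly as you do. Your bookkeeping $k/m = kd/N = r$ and $\tfrac{N}{nd}=\tfrac{m}{n}$, which makes the binomial exponent $mH(k/m)=mH(r)$ cancel against the $(1+\epsilon)mH(r)$ term, is the correct and complete justification of the step the paper leaves implicit.
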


\begin{proof}
Let $\lambda=(1+\epsilon)g(r)$. The result then follows by
replacing (\ref{eq:proof_random2a})-(\ref{eq:proof_random2b}) with
\begin{equation}
1+\lambda \geq
1+\sqrt{\frac{kd}{n}}+\sqrt{(1+\epsilon)\frac{2N}{nd}H(r)},
\end{equation}
which leads to $\prob(\bar{\sigma}>1+\lambda) \leq
\expp{-mH(r)\epsilon}$.
\end{proof}

To evaluate the asymptotic behavior of block-RIP we note that for
every $\epsilon>0$ the right-hand side of (\ref{eq:randomD2}) goes
to zero when $N=md\rightarrow \infty$. Consequently,  for fixed
$d$
\begin{equation}\label{eq:num_meas}
\delta_{k|\I} < \rho(r) \deft -1+[1+g(r)]^2,
\end{equation}
with overwhelming probability. In Fig.~\ref{FigBlockRIP} we
compute $\rho(r)$ for several problem dimensions and compare it
with standard RIP which is obtained when $d=1$. Evidently, as the
non-zero entries are forced to block structure, a wider range of
sparsity ratios $r$ satisfy the condition of
Theorem~\ref{thm:rip}.

\begin{figure}
\centering
\includegraphics[scale=1]{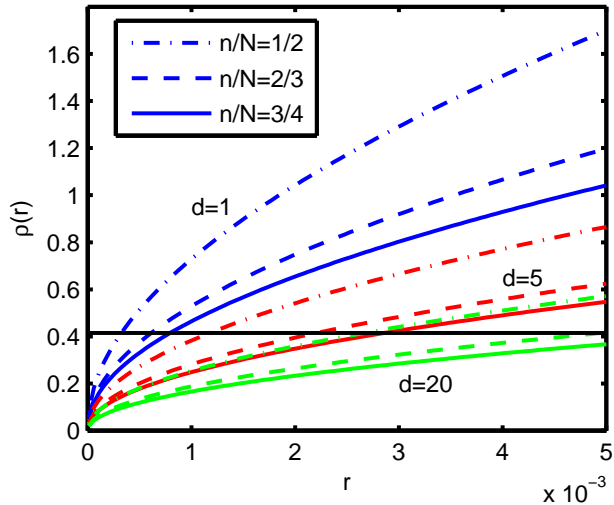}
\caption{The upper bound on $\delta_{k|\I}$ as a function of the
sparsity ratio $r$, for three sampling rates $n/N$, and three
block structures $d=1,5,20$. The horizontal threshold is fixed on
$\rho^\ast=\sqrt{2}-1$ representing the threshold for equivalence
derived in Theorem~\ref{thm:rip}.}\label{FigBlockRIP}
\end{figure}

Although Fig.~\ref{FigBlockRIP} shows advantage for block-RIP, the
absolute sparsity ratios predicted by the theory are pessimistic
as also noted in \cite{CT05},\cite{CT06} in the case of $d=1$. To
offer a more optimistic viewpoint, the RIP and block-RIP constants
were computed brute-force for several instances of $\bbD$ from the
Gaussian ensemble. Fig.~\ref{FigEmpricialBlockRip} plots the
results and qualitatively affirms that block-RIP constants are
more ``likely" to be smaller than their standard RIP counterparts,
even when the dimensions $n,N$ are relatively small.

\begin{figure*}
\centering \mbox {
\subfigure[]{\includegraphics[scale=0.62]{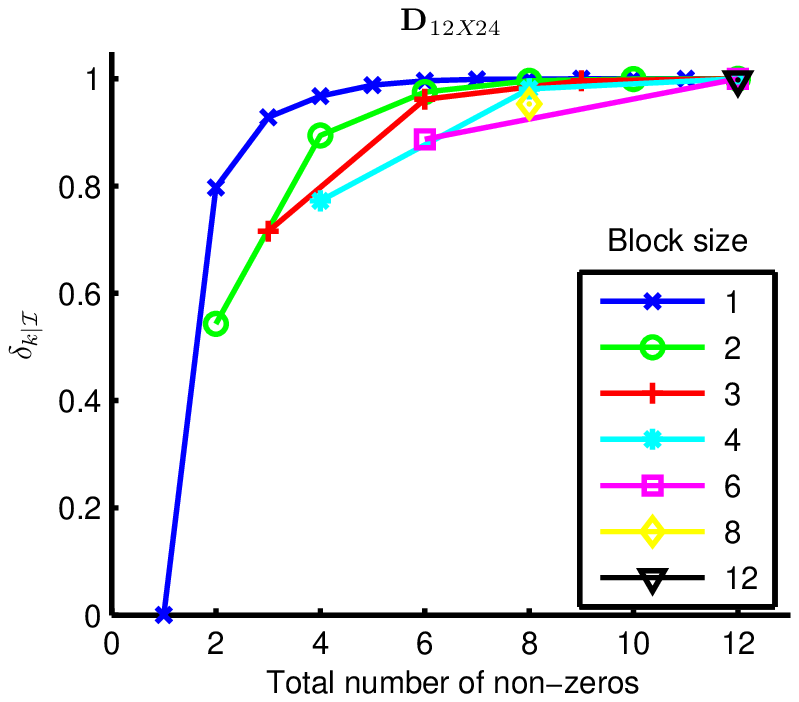}}
\subfigure[]{\includegraphics[scale=0.62]{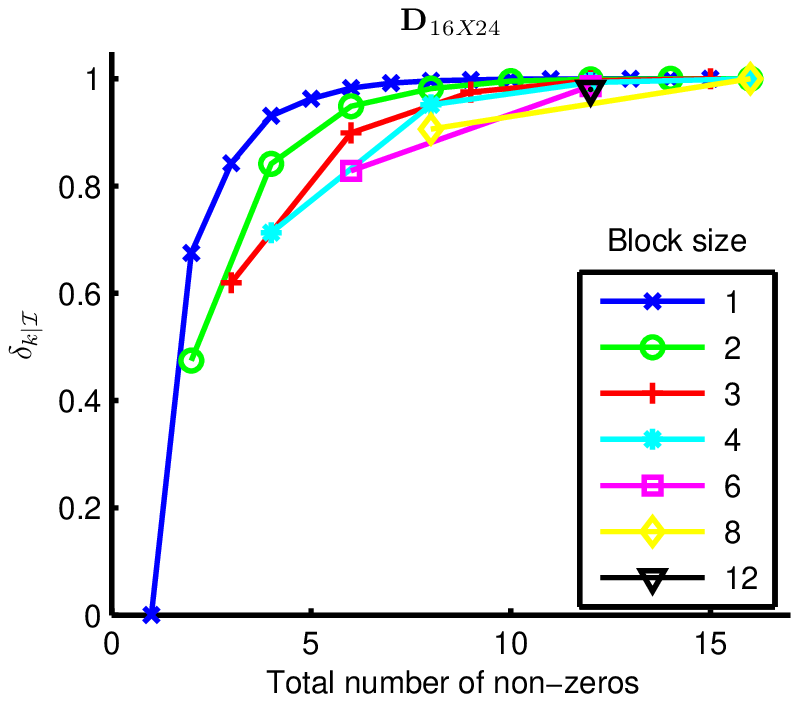}}
\subfigure[]{\includegraphics[scale=0.62]{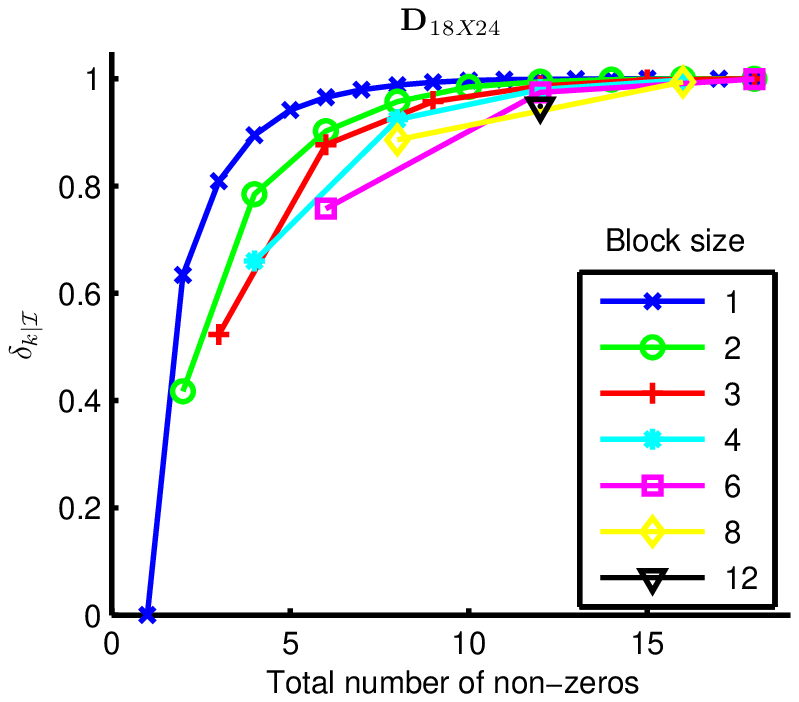}}}
\caption{The standard and block-RIP constants $\delta_{k|\I}$ for
three different dimensions $n,N$. Each graph represent an average
over 10 instances of random matrix $\bbD$. Each instance of $\bbD$
is scaled by a factor such that (\ref{eq:stablity}) is satisfied
with $\alpha+\beta=2$.} \label{FigEmpricialBlockRip}
\end{figure*}

An important question is how many samples are needed roughly in
order to guarantee stable recovery. This question is addressed in
the following proposition, which quotes a result from \cite{BD08}
based on the proofs of \cite{BDDW08}; we rephrase the result to
match our notation.
\begin{proposition}[{\cite[Theorem 3.3]{BD08}}]\label{Corr:Thomas}
Consider the setting of Proposition~\ref{prop:randomD}, namely a
random Gaussian matrix $\bbD$ of size $n\times N$ and block sparse
signals over $\I=\{d_1=d,\ldots,d_m=d\}$, where $N=md$ for some
integer $m$. Let $t>0$ and $0 < \delta < 1$ be constant numbers.
If
\begin{equation}\label{eq:nummeasrandom}
n \geq
\frac{36}{7\delta}\left(\ln(2L)+kd\ln\bl\frac{12}{\delta}\br+t\right),
\end{equation}
where $L=\binom{m}{k}$, then $\bbD$ satisfies the block-RIP
(\ref{eq:rip}) with restricted isometry constant
$\delta_{k|\I}=\delta$, with probability at least $1-e^{-t}$.
\end{proposition}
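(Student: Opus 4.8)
The plan is to follow the net-plus-concentration argument of \cite{BDDW08} — the route taken in \cite{BD08} — with the single modification that the concluding union bound ranges over the $L=\binom{m}{k}$ admissible block-supports rather than over all $\binom{N}{kd}$ coordinate subsets of size $kd$. This substitution is the only place the block structure enters, and it is exactly what makes \eqref{eq:nummeasrandom} cheaper than the analogous estimate for unstructured $kd$-sparse recovery, where the logarithmic factor would be governed by $\binom{N}{kd}$. The one probabilistic ingredient is the standard Gaussian concentration inequality: for a fixed $\bv\in\RR^N$ and any $\eta\in(0,1)$,
\begin{equation}
\prob\bl\,\bigl|\,\|\bbD\bv\|_2^2-\|\bv\|_2^2\,\bigr|\geq\eta\,\|\bv\|_2^2\,\br\;\leq\;2\,\expp{-n\,c_0(\eta)},\qquad c_0(\eta)=\frac{\eta^2}{4}-\frac{\eta^3}{6},
\end{equation}
which holds because $n\,\|\bbD\bv\|_2^2/\|\bv\|_2^2$ is $\chi^2_n$-distributed; I would use it with $\eta$ of order $\delta$.

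Fix a set $T$ of $k$ block indices, so that the associated coordinate subspace $\X_T\subset\RR^N$ has dimension $kd$. Choose a $\gamma$-net $\Q_T$ of the unit sphere of $\X_T$ with $|\Q_T|\leq(3/\gamma)^{kd}$, apply the concentration bound at level $\eta$ to each of its points, and union bound over $\Q_T$: with probability at least $1-2(3/\gamma)^{kd}\expp{-n c_0(\eta)}$ one has $\bigl|\,\|\bbD\bu\|_2^2-1\,\bigr|\leq\eta$ for all $\bu\in\Q_T$ simultaneously. The elementary covering argument of \cite{BDDW08} then promotes this to the whole subspace: setting $A=\sup\{\|\bbD\bv\|_2:\bv\in\X_T,\,\|\bv\|_2=1\}$, one gets $A\leq(1+\eta)^{1/2}+\gamma A$, hence $A\leq(1+\eta)^{1/2}/(1-\gamma)$, and a matching lower estimate follows the same way, so that a suitable choice of $\gamma$ (concretely $\gamma=\delta/4$, so that $(3/\gamma)^{kd}=(12/\delta)^{kd}$, which is what produces the $12/\delta$ inside the logarithm of \eqref{eq:nummeasrandom}) and of $\eta$ collapses both bounds to $(1-\delta)\|\bv\|_2^2\leq\|\bbD\bv\|_2^2\leq(1+\delta)\|\bv\|_2^2$ on $\X_T$.

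Finally, every block-$k$-sparse $\bc$ lies in one of the $L=\binom{m}{k}$ subspaces $\X_T$, so a union bound over $T$ shows that $\bbD$ violates the block RIP \eqref{eq:rip} with constant $\delta_{k|\I}=\delta$ with probability at most
\begin{equation}
2\,L\,(3/\gamma)^{kd}\,\expp{-n c_0(\eta)}\;=\;\expp{\,\ln(2L)+kd\ln(12/\delta)-n\,c_0(\eta)\,}.
\end{equation}
Requiring this to be at most $\expp{-t}$ — i.e.\ $n\,c_0(\eta)\geq\ln(2L)+kd\ln(12/\delta)+t$ — and inserting the elementary bound $c_0(\eta)\geq\frac{7}{36}\eta^2$ valid for $\eta\leq\frac13$ (this is the source of the constant $36/7$) reproduces the sample condition \eqref{eq:nummeasrandom} together with the claimed success probability $1-\expp{-t}$. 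Conceptually this is just the scalar RIP proof of \cite{BDDW08}; I do not expect a genuine obstacle, only the constant bookkeeping of the last two steps — selecting an admissible pair $(\gamma,\eta)$ and chasing it through $c_0$, the net cardinality $(3/\gamma)^{kd}$, and the binomial bound — so that the three contributions to the exponent assemble exactly into the form displayed in \eqref{eq:nummeasrandom}. One could instead attempt to extract the statement from the apparatus behind Proposition~\ref{prop:randomD}, but that is phrased asymptotically (``for $n,N$ large enough'') and would not directly yield a clean finite-$n$ bound with explicit constants, so the \cite{BDDW08}-style argument is the natural one here.
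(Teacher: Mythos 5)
The paper does not actually prove this proposition: it is imported verbatim from \cite{BD08} (itself built on the proofs of \cite{BDDW08}), so there is no internal derivation to compare against. Your reconstruction follows exactly the route the paper attributes to that source --- pointwise Gaussian concentration, a $\gamma$-net of the unit sphere of each $kd$-dimensional block subspace, the standard covering upgrade from the net to the subspace, and a union bound over the $L=\binom{m}{k}$ admissible block supports rather than over all $\binom{N}{kd}$ coordinate subsets --- and you correctly identify that last substitution as the only place the block structure enters and as the source of the gain over unstructured RIP. Structurally there is nothing to object to, and the concentration inequality and net-cardinality bound you quote are the right ones.

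The step that does not close as written is the final constant assembly, which you defer as ``bookkeeping'' but which is precisely where the stated form of \eqref{eq:nummeasrandom} lives. With net radius $\gamma=\delta/4$, the covering inequality $A\leq(1+\eta)^{1/2}/(1-\gamma)$ forces the pointwise concentration level to satisfy roughly $\eta\leq\delta/2$ in order to conclude $A^2\leq 1+\delta$; feeding $\eta\asymp\delta$ into $c_0(\eta)\geq\frac{7}{36}\eta^2$ then yields a sample condition whose prefactor scales as $\delta^{-2}$, not the $\delta^{-1}$ appearing in \eqref{eq:nummeasrandom}. Conversely, extracting $\frac{36}{7\delta}$ from the requirement $n\,c_0(\eta)\geq\ln(2L)+kd\ln(12/\delta)+t$ via $c_0(\eta)\geq\frac{7}{36}\eta^2$ would force $\eta=\sqrt{\delta}>\delta$, i.e.\ a \emph{weaker} pointwise guarantee than the uniform isometry you are trying to establish, and the covering argument cannot upgrade a weaker pointwise bound into a stronger uniform one. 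So either the parameters must be selected by a different scheme --- the one actually used in \cite[Theorem 3.3]{BD08} --- or your route proves the proposition only with a $\delta^{-2}$ prefactor. Since the proposition is in any case a quotation, the clean fix is to track the exact parameter choices of \cite{BD08} rather than assert that the three contributions ``assemble exactly'': as sketched, they do not.
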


As observed in \cite{BD08}, the first term in
(\ref{eq:nummeasrandom}) has the dominant impact on the required
number of measurements in an asymptotic sense. Specifically, for
block sparse signals
\begin{equation}
(m/k)^{k} \leq L={m \choose k} \leq (e\,m/k)^{k}.
\end{equation}
Thus, for a given fraction of nonzeros $r=kd/N$, roughly $n\approx
k\log(m/k)=-k\log(r)$ measurements are needed. For comparison, to
satisfy the standard RIP a larger number $n\approx -kd\log(r)$ is
required. Note that Corollary~\ref{Corr:Thomas} puts the emphasis
on the required problem dimensions to satisfy a given RIP level.
In contrast, Proposition~\ref{prop:randomD} provides a tail bound
on the expected isometry constant for given problem dimensions.

\section{Conclusion}

In this paper, we studied the problem of recovering an unknown
signal $x$ in an arbitrary Hilbert space $\HH$, from a given set
of $n$ samples which are modelled as inner products of $x$ with
sampling functions $s_i,1 \leq i \leq n$. The signal $x$ is known
to lie in a union of subspaces, so that $x \in \V_i$ where each of
the subspaces $\V_i$ is a sum of $k$ subspaces $\A_i$ chosen from
an ensemble of $m$ possibilities. Thus, there are ${m \choose k}$
possible subspaces in which $x$ can lie, and a-priori we do not
know which subspace is the true one. While previous treatments of
this model considered invertibility conditions, here we provide
concrete recovery algorithms for a signal over a structured union.

We began by showing that recovering $x$ can be reduced to a
sparsity problem in which the goal is to recover a block-sparse
vector $\bc$ from measurements $\by=\bbd \bc$ where the non-zero
values in $\bc$ are grouped into blocks. The measurement matrix
$\bbd$ is equal to $S^*A$ where $S^*$ is the sampling operator and
$A$ is  a set transformation corresponding to a basis for the sum
of all $\A_i$. To determine $\bc$ we suggested a mixed
$\ell_2/\ell_1$ convex optimization program that takes on the form
of an SOCP. Relying on the notion of block-RIP, we developed
sufficient conditions under which $\bc$ can be perfectly recovered
using the proposed algorithm.  We also proved that under the same
conditions, the unknown $\bc$ can be stably approximated in the
presence of noise. Furthermore, if $\bc$ is not exactly
block-sparse, then its best block-sparse approximation can be
approached using the proposed method.
 We then showed that when $\bbd$ is
chosen at random, the recovery conditions are satisfied with high
probability.

Specializing the results to MMV systems, we proposed a new method
for sampling in MMV problems. In this approach each measurement
vector depends on all the unknown vectors. As we showed, this can
lead to better recovery rate. Furthermore, we established
equivalence results for a class of MMV algorithms based on RIP.

Throughout the paper, we assumed a finite union of subspaces as
well as finite dimension of the underlying spaces. An interesting
future direction to explore is the extension of the ideas
developed herein to the more challenging problem of recovering $x$
in a possibly infinite union of subspaces, which are not
necessarily finite-dimensional. Although at first sight this seems
like a difficult problem as our algorithms are inherently
finite-dimensional, recovery methods for sparse signals in
infinite dimensions have been addressed in some of our previous
work \cite{ME07,ME08,ME09,E08,E082}. In particular, we have shown
that a signal lying in a union of shift-invariant subspaces can be
recovered efficiently from certain sets of sampling functions. In
our future work, we intend to combine these results with those in
the current paper in order to develop a more general theory for
recovery from a union of subspaces.

A recent preprint \cite{MCS08} that was posted online after the
submission of this paper proposes a new framework called
model-based compressive sensing (MCS). The MCS approach assumes a
vector signal model in which only certain predefined sparsity
patterns may appear. In general, obtaining efficient recovery
algorithms in such scenarios is difficult, unless further
structure is imposed on the sparsity patterns. Therefore, the
authors consider two types of sparse vectors: block sparsity as
treated here, and a wavelet tree model. For these settings, they
generalize two known greedy algorithms: CoSaMP \cite{NT09} and
iterative hard thresholding (IHT) \cite{BD08}. These results
emphasize our claim that theoretical questions of uniqueness and
stable representation can be studied for arbitrary unions as in
\cite{BD09}. However tractable recovery algorithms inherently
require some structure, as the one considered here.

The union model developed in this paper is broader than the
block-sparse setting treated in \cite{MCS08} in the sense that it
allows to model linear dependencies between the nonzero values
rather than only between their locations, by appropriate choice of
subspaces in (\ref{eq:union}), (\ref{eq:unionv}). In addition, we
aim at optimization-based recovery algorithms
(\ref{eq:l1}),(\ref{eq:l1n}) which require selecting the objective
in order to promote the model properties. Finally, we emphasize
that our results are non asymptotic and also ensure stable
recovery in the presence of noise and signal mismodeling.

\begin{singlespace}
\bibliographystyle{IEEEtran}
\bibliography{IEEEabrv,notes,paper,newbib}
\end{singlespace}

\end{document}